\documentclass[10pt,a4paper,twoside]{amsart}


 \usepackage[english]{babel}

\usepackage{amsthm}
\usepackage{amssymb}
\usepackage{amsmath}
\usepackage{amsopn}
\usepackage{bm}
\usepackage{hhline}
\usepackage{graphicx}

\usepackage{cite}
\allowdisplaybreaks[4]

\newtheorem{lemma}{Lemma}
\newtheorem{theorem}{Theorem}
\newtheorem{proposition}{Proposition}
\newtheorem{corollary}{Corollary}

\theoremstyle{definition}
\newtheorem{remark}{Remark}
\newtheorem{example}{Example}

\DeclareMathOperator{\Complex}{\mathbb{C}}
\DeclareMathOperator{\Real}{\mathbb{R}}
\DeclareMathOperator{\Res}{Res}
\DeclareMathOperator{\Mat}{Mat}
\DeclareMathOperator{\diag}{diag}

\DeclareMathOperator{\const}{const}

\newcommand{\rme}{\mathrm{e}}

\DeclareMathOperator{\rank}{rank}
\DeclareMathOperator{\spanOp}{span}

\newcommand{\dd}{\mathrm{d}}
\DeclareMathOperator{\Rew}{Re}
\DeclareMathOperator{\Imw}{Im}

\allowdisplaybreaks[4]

\title{On degenerate sigma-functions in genus two}
\author{Julia Bernatska}
\address{National University of `Kyiv-Mohyla Academy', Kyiv, Ukraine}
\email{bernatskajm@ukma.kiev.ua, jbernatska@gmail.com}

\author{Dmitry Leykin}
\address{NASU Institute of Magnetism, Kyiv, Ukraine}
\email{dmitry.leykin@gmail.com}

\begin{document}

\begin{abstract}
We obtain explicit expressions for genus 2 degenerate sigma-func\-tion in terms of
genus $1$ sigma-function and elementary functions as solutions
of a system of linear PDEs satisfied by the sigma-function. By way of application
we derive a solution for a class of generalized Jacobi inversion problems on elliptic curves, 
a family of Schr\"{o}dinger-type operators on a line with common spectrum consisting of 
a point and two segments, explicit construction of a field of three-periodic meromorphic functions.
Generators of rank $3$ lattice in $\Complex^2$ are given explicitly.
\end{abstract}

\keywords{Sigma function, singular curve, degenerate lattice, generalized Jacobi inversion problem, 
three-periodic function}

\maketitle

MSC 2010: 14Hxx, 32A20, 33E05, 35C05

\section*{Introduction}
The concept of a sigma-function in higher genus was introduced
by F. Klein \cite{Klein} in 1886 as an extensive generalization of elliptic 
Weierstrass sigma-function \cite{Weier1894}.
The importance of a sigma-function lies in the fact that it is a convenient generator of Abelian functions
in $g$ complex variables, i.e. meromorphic multiply periodic functions that possess 
the maximal number $2g$ of periods. From this viewpoint sigma-function in genus 2 
was studied since the time of Klein, and the classical results were 
well documented in Baker monograph \cite{Baker}.

Theory of sigma-functions progresses in several ways. 
The first approach considers a generalization called in \cite{BEL1997} 
Kleinian sigma-function $\sigma(u)=
\exp\big(\frac{1}{2} u^t \eta \omega^{-1} u\big) \theta(\omega^{-1} u;\omega,\omega')$, 
which is a modular invariant representative of the class of theta 
functions\footnote{Here Riemann $\theta$-function is used. As usual, $\omega$ 
and $\eta$ denote first and second kind integrals along $\mathfrak{a}$-cycles, and $\omega'$ denotes 
first kind integrals along $\mathfrak{b}$-cycles.} with $u\in\Complex^g$. 
Dealing with this expression for the multivariative sigma-function the authors 
have examined fields of Abelian functions 
associated with hyperelliptic curves,
described Jacobi and Kummer varieties as algebraic varieties,  
developed contemporary applications of Abelian functions  
to completely integrable equations of theoretical and mathematical physics. Further theoretical 
developments in study of sigma and its relation to theta and tau functions are achieved 
in  \cite{Nak2010, Nak2015}. Modular definition of sigma-function for generic 
algebraic curves is proposed  in \cite{KorShr} whilst the case of $(n,s)$-curves and, in particular,  
of hyperelliptic curves is studied in this context in \cite{EEE2013}.

Another approach, aimed to construct series expansions for multivariative sigma-functions,
is developed in sequential papers \cite{BEL1999,BL2002,BL2004,BL2008}.
This is a powerful technique for obtaining  sigma-series explicitly as well as 
an elegant theory derived from unfolding Pham singularities. The theory gives many byproducts. 
For example, the canonical basis of second kind differentials associated to the first kind differentials are
constructed without introducing Kleinian bi-differential unlike the first approach; 
in their turn the second kind differentials can be used for constructing the bi-differential.
Algebraic identities and addition laws in the field of Abelian functions are easily obtained 
from a sigma-series expansion, as well as Hirota bilinear equations 
for integrable systems associated with a curve.
The theory is applicable for both hyper- and non-hyperellitic plane curves.

The equivariant approach based on covariance with respect to transformations 
of an algebraic curve is proposed in \cite{AEE2003,A2008,A2011}. For 
hyperelliptic curves the transformations are induced by SL(2,$\Complex$) action.
Identities for multivariative Abelian functions arise as
finite-dimensional irreducible representations of the corresponding algebra $\mathfrak{sl}(2,\Complex)$; 
and the covariant form of Kleinian bi-differential is constructed. It was shown that the Hirota derivative 
plays the role of a partial intertwining operator in the representation theory of $\mathfrak{sl}(2,\Complex)$
and so can serve for generation of algebraic invariants \cite{A1999}.  A generalization of bilinear Hirota
operators possessing an equivariance property is applied for constructing a basis in the space
of Abelian functions with poles of at most a given order \cite{EA2012}.

There are also computational approach where series expansions for multivariative sigma-functions 
in higher genera and algebraic identities between the corresponding Abelian functions
are constructed by numerical methods independently of the second approach,
but partly inspired by it. Series for some sigma-functions in genera $3$, $4$, $6$  
were obtained \cite{BG2006, EEMOP2007, 
BEGO2008, EEG2009, EEG2010} as well as identities for
Abelian functions of the kind $\wp_{[k]}(u)=-\partial^{[k]} \log \sigma(u)$.

\medskip
The present paper is essentially based on the theory of constructing series for
multivariative sigma-functions, mostly on the paper \cite{BL2004}. The theory 
originates from  Weierstrass's definition \cite{Weier1894} of sigma-function as
the entire function depending on 
the three variables $(u;g_2,g_3)\in \Complex \times\Complex^2$  and satisfying the 
set of differential equations
\begin{gather}
Q_0 (\sigma) = 0,\qquad Q_2 (\sigma) = 0, \label{WeierAnOP}\\
\begin{split}
 &Q_0(\sigma) = -u \sigma_u + 4g_2 \sigma_{g_2} + 6g_3 \sigma_{g_3} + \sigma , \notag\\
 &Q_2(\sigma) = -\tfrac{1}{2}\sigma_{u,u} - \tfrac{1}{24} g_2 u^2 \sigma 
 + 6 g_3 \sigma_{g_2} + \tfrac{1}{3} g_2^2 \sigma_{g_3} \notag
 \end{split}
\end{gather}
with initial condition $\sigma(u;0;0)=u$;
here $g_2$ and $g_3$ are parameters of Weierstrass elliptic curve $y^2=4x^3-g_2 x-g_3$.
A multivariative sigma-function in genus $g$ is an entire function $\sigma(u;\lambda)$ 
of $3g-m$ complex variables $(u;\lambda)\in \Complex^g \times\Complex^{2g-m}$, where $m$ is modality
(in the hyperelliptic case, we are dealing with in the present paper, $m=0$, for more details see \cite{BL2008}).
The multivariative sigma-function is defined by a set of heat equations \cite{BL2004} 
similar to \eqref{WeierAnOP}
with initial conditions in the form of so called Schur-Weierstrass polynomials \cite{BEL1999}.

\medskip
In this paper we study cases when sigma-function in genus $2$ can be represented 
as an aggregate of sigma-function in genus $1$ and elementary functions.  For $g=2$ sigma-function
depends on six variables $(u;\lambda)\in \Complex^2 \times\Complex^4$, 
where $\lambda$ is the set of parameters of a genus $2$ curve
\begin{equation}\label{CurveHG2Intro}
y^2 = x^5 + \lambda_4 x^3 + \lambda_6 x^2 + \lambda_8 x + \lambda_{10}.
\end{equation}

In fact, curve \eqref{CurveHG2Intro} has \emph{actual genus} $2$ 
only if a certain constraint is imposed on $\lambda$.
We say that the curve has \emph{virtual genus} $2$ and focus on the cases when its actual genus is lower.
That is the cases when genus $2$ sigma-function can be expressed in terms of genus $1$ sigma-function 
and elementary functions. Strata of the space of parameters $\lambda$ corresponding to fixed actual genera 
of \eqref{CurveHG2Intro} are analyzed in Sections~\ref{s:Strata} and \ref{s:Frames}. 
Then we proceed to our main question by carrying out an analysis of the system of 
linear partial differential equations, so called heat equations in a non-holonomic frame \cite{BL2004}, 
that are satisfied by sigma-function in Section~\ref{s:HeatEqs}, and 
derive our main result in Section~\ref{s:DegenSigma}. 

In Section~\ref{s:Appl} we apply the result to a few selected problems:
solution of a generalized Jacobi inversion problem, construction of 
a Schr\"{o}dinger type operator with 
spectrum composed of two segments and a point, description of the structure of a field
of three periodic functions in two complex variables. In the course of our research we
also obtain a stratification of the space of parameters of genus $2$ curves \eqref{CurveHG2Intro}
with respect to the rank of a period lattice corresponding to the curve.

\section{Notation}
Below in the paper we consider the space $\mathcal{C}$ of genus $2$ curves of the form 
\begin{equation}\label{CurveHG2}
x^5 - y^2 + \lambda_4 x^3 + \lambda_6 x^2 + \lambda_8 x + \lambda_{10}=0.
\end{equation}
The parameters $\lambda=(\lambda_4,\lambda_6,\lambda_8,\lambda_{10})$ run over $\Complex^4$.
Genus 2 sigma-function is denoted by $\bm{\sigma}(u;\lambda)$, where $u=(u_3,u_1)$.
We assign Sat\={o} weights to the variables by the rule $\deg \lambda_i=i$, $\deg u_k=-k$.
Accordingly, with $\deg x=2$ and $\deg y=5$ the left hand side of \eqref{CurveHG2} is homogeneous of
weight $10$. It is important, that most of functions and operators appearing below respect Sat\={o} weights,
in particular, $\deg \bm{\sigma}(u;\lambda)={-}3$.

In what follows we also deal with a family of genus $1$ curves
\begin{equation}\label{CurveHG1}
X^3 - Y^2 + \gamma_4 X + \gamma_6 =0,
\end{equation}
here $\deg X=2$ and $\deg Y=3$. To avoid confusion we denote the corresponding genus 1 sigma-function by
$\sigma(u_1)$, which stands for standard Weierstrass sigma-function with invariants
$(g_2,g_3)=(-4\gamma_4,-4\gamma_6)$.

For brevity we use the notation $\partial_x$ in the place of $\partial/\partial x$.

\section{Stratification of the space of parameters}\label{s:Strata}
The space of parameters $\Lambda$ is naturally stratified into three strata: $\Lambda_2$, 
$\Lambda_1$, and $\Lambda_0$ which correspond to curves in genus $g\,{=}\,2$, $1$, and $0$ respectively.
\begin{proposition}\label{P:LabmdaS}
The space $\Lambda$ is a disjoint union 
$\Lambda_2 \cup \Lambda_1 \cup \Lambda_0$ and
\begin{align*}
 &\Lambda_2 = \{\lambda\in \Complex^4 \mid \Delta(\lambda)\neq 0\},\\
 &\Lambda_1 = \{\lambda\in \Complex^4 \mid \Delta(\lambda) = 0,\ \Gamma(\lambda)\neq 0\},\\
 &\Lambda_0 = \{\lambda\in \Complex^4 \mid \Gamma(\lambda)= 0\},
\end{align*}
where 
\begin{equation}\label{DeltaDef}
 \begin{split}
  \Delta(\lambda) &= 3125 \lambda_{10}^4 - 3750 \lambda_{10}^3 \lambda_6 \lambda_4
    +2000 \lambda_{10}^2 \lambda_8^2 \lambda_4 +2250 \lambda_{10}^2 \lambda_8 \lambda_6^2 \\ & \quad
   -1600 \lambda_{10} \lambda_8^3 \lambda_6 + 256 \lambda_8^5
     - 900 \lambda_{10}^2 \lambda_8 \lambda_4^3 +825 \lambda_{10}^2 \lambda_6^2 \lambda_4^2
     +560 \lambda_{10} \lambda_8^2 \lambda_6\lambda_4^2    \\ & \quad
     -630 \lambda_{10} \lambda_8 \lambda_6^3 \lambda_4
     +108 \lambda_{10} \lambda_6^5 -128 \lambda_8^4 \lambda_4^2 +144 \lambda_8^3 \lambda_6^2\lambda_4 
     - 27 \lambda_8^2 \lambda_6^4  \\ & \quad +
  \big(108 \lambda_{10}^2 \lambda_4^5
   - 72 \lambda_{10} \lambda_8  \lambda_6 \lambda_4^4
   + 16\lambda_{10} \lambda_6^3 \lambda_4^3  +16 \lambda_8^3 \lambda _4^4 - 4 \lambda_8^2 \lambda_6^2 \lambda_4^3 ,
 \end{split}
\end{equation}
and
\begin{equation*}
  \Gamma(\lambda) = \begin{pmatrix} 
  50 \lambda_{10}\lambda_6 - 80 \lambda_8^2
  + 36 \lambda_8 \lambda_4^2 - 27 \lambda_6^2 \lambda_4 
  - 4 \lambda_4^4 \\
   200 \lambda_{10} \lambda_8  
   - 40 \lambda_{10} \lambda_4^2 - 36 \lambda_8 \lambda_6 \lambda_4 + 27 \lambda_6^3 
   + 4 \lambda_6 \lambda_4^3  \\
    625 \lambda _{10}^2 
    - 720 \lambda_8^2 \lambda_4 + 135 \lambda_8 \lambda_6^2 
    + 308 \lambda_8 \lambda_4^3 - 216 \lambda_6^2 \lambda_4^2 
    - 32 \lambda_4^5 \\ 1600 \lambda_8^3
    - 1040 \lambda_8^2 \lambda_4^2 + 360 \lambda_8 \lambda_6^2 \lambda_4 + 135 \lambda_6^4 
    + 224 \lambda_8 \lambda_4^4 - 88 \lambda_6^2 \lambda_4^3 
   -16 \lambda_4^6
  \end{pmatrix}.
\end{equation*}
\end{proposition}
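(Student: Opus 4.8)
The plan is to recognize $\Delta(\lambda)$ as (a multiple of) the discriminant of the quintic $f(x) = x^5 + \lambda_4 x^3 + \lambda_6 x^2 + \lambda_8 x + \lambda_{10}$, so that $\Delta(\lambda) \neq 0$ precisely when $f$ has five distinct roots; in that case the curve \eqref{CurveHG2} is smooth of genus $2$, giving $\Lambda_2$. When $\Delta(\lambda) = 0$, the quintic acquires a multiple root, and the curve degenerates. The key observation is that the normalization of such a degenerate curve has genus equal to $2$ minus the number of nodes (counted with the reduction in arithmetic genus caused by each singularity), so the actual genus drops to $1$ when $f$ has exactly one double root and three simple roots, and drops further only when the singularity is worse (a root of multiplicity $\geq 3$, or two double roots, i.e. two nodes). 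The auxiliary vector $\Gamma(\lambda)$ is designed precisely to cut out this deeper degeneracy locus: $\Gamma(\lambda) = 0$ iff $f$ has a root of multiplicity $\geq 3$ or two distinct multiple roots.

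First I would compute $\Delta(\lambda)$ directly as $\mathrm{Resultant}_x(f, f')$ up to the standard constant, and verify that this resultant agrees with the displayed polynomial (this is a finite symbolic computation, made tractable by Sat\=o-weight homogeneity: $\Delta$ must be weight-$40$ in $\lambda$, which severely constrains the monomials). This establishes $\Lambda_2 = \{\Delta \neq 0\}$ and shows $\Lambda_2$ is exactly the locus where \eqref{CurveHG2} is a smooth genus-$2$ curve. Next, on the locus $\Delta = 0$, I would parametrize: write $f(x) = (x-a)^2 g(x)$ with $g$ of degree $3$, and study when $g$ and $x - a$ together have no repeated roots. The condition for the genus to be exactly $1$ is that all remaining roots (the root $a$ again versus roots of $g$, and the roots of $g$ among themselves) are simple; the complementary condition — $g(a) = 0$, or $\mathrm{disc}(g) = 0$ — is a codimension-$2$ subvariety of $\{\Delta = 0\}$. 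I would then identify this subvariety with the common zero locus of the four components of $\Gamma(\lambda)$: each component is again weight-homogeneous (weights $8$, $10$, $20$, $24$ respectively, reading off the displayed entries), and I would check that the ideal they generate coincides with the ideal of the "multiplicity $\geq 3$ or double double-root" stratum, for instance by exhibiting $\Gamma$ as built from subresultants of $f$ and $f'$ (the vanishing of the next subresultant after the resultant is exactly the condition $\gcd(f, f') $ has degree $\geq 2$).

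To finish, I would assemble the three pieces. The inclusions $\Lambda_2, \Lambda_1, \Lambda_0$ as defined are pairwise disjoint by construction (the defining conditions on $\Delta$ and $\Gamma$ are mutually exclusive), and their union is all of $\Complex^4$ since every $\lambda$ falls into exactly one case: $\Delta \neq 0$; or $\Delta = 0$ with $\Gamma \neq 0$; or $\Gamma = 0$ (noting $\Gamma = 0 \Rightarrow \Delta = 0$, which must be verified as a polynomial identity — each component of the syzygy expressing $\Delta$ in terms of the $\Gamma_i$). The genus-labelling then follows from the degeneration analysis above: on $\{\Delta \neq 0\}$ genus is $2$; on $\{\Delta = 0, \Gamma \neq 0\}$ the curve has exactly one node and geometric genus $1$; on $\{\Gamma = 0\}$ the geometric genus is $0$ (a rational curve), either because $f$ has a root of multiplicity $\geq 3$, giving a cusp or worse, or because $f$ has two double roots, giving two nodes on a genus-$2$ arithmetic-genus curve, i.e. geometric genus $0$.

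The main obstacle I anticipate is the identification of the ideal generated by the four components of $\Gamma(\lambda)$ with the ideal of the codimension-$2$ degeneracy stratum — that is, proving that these four particular polynomials (and not merely their zero set) are the "right" generators, and in particular that $\{\Gamma = 0\}$ is exactly the geometric-genus-$0$ locus with no lower-genus exceptions slipping through. This is where one genuinely needs the subresultant interpretation (or an explicit Gröbner-basis / elimination computation over $\Complex[\lambda_4,\lambda_6,\lambda_8,\lambda_{10}]$), rather than dimension counting alone; the weight-homogeneity of all polynomials involved is the tool that keeps this computation finite and checkable.
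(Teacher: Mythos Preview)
Your overall outline is sound for $\Delta$ (it \emph{is} the discriminant of the quintic, and the paper's proof amounts to the same thing, done by elimination rather than as a resultant), but there is a genuine error in your treatment of $\Lambda_0$.

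You assert that a root of multiplicity $\geq 3$ forces geometric genus $0$. That is false: for $f(x)=(x-a)^3(x-b)(x-c)$ with $a,b,c$ distinct, the singularity of $y^2=f(x)$ at $(a,0)$ is an ordinary cusp $y^2=x^3$, whose $\delta$-invariant is $1$, so the geometric genus drops only to $1$. (Concretely: set $\tilde y=y/(x-a)$ to obtain $\tilde y^2=(x-a)(x-b)(x-c)$, an honest elliptic curve.) Thus the $(3,1,1)$ stratum lies in $\Lambda_1$, not $\Lambda_0$; the paper records exactly this in Table~\ref{T:Partitions}. The correct description of the genus-$0$ locus is: $f$ has two distinct multiple roots, \emph{or} a single root of multiplicity $\geq 4$.

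This wrecks your subresultant plan. The vanishing of the first principal subresultant of $(f,f')$ detects $\deg\gcd(f,f')\geq 2$, which \emph{includes} the $(3,1,1)$ case (there $\gcd(f,f')=(x-a)^2$). So the subresultant ideal cuts out a strictly larger variety than $\{\Gamma=0\}$. You can check this on the example $f(x)=x^5-x^3$ (so $\lambda_4=-1$, $\lambda_6=\lambda_8=\lambda_{10}=0$): the first subresultant vanishes, yet the first component of $\Gamma$ equals $-4\lambda_4^4=-4\neq 0$. (Incidentally, the Sat\={o} weights of the four components of $\Gamma$ are $16,18,20,24$, not $8,10,20,24$.)

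The paper's argument avoids this trap by parametrizing: write $f(x)=(x-a_2)^2 g(x)$ with $g$ cubic, and ask for $\operatorname{disc}(g)=0$. If $g$ has a double root $b\neq a_2$ you get two distinct multiple roots; if $b=a_2$ you get multiplicity $\geq 4$ at $a_2$. Either way the total $\delta$-invariant is $\geq 2$ and the genus is $0$; and crucially the $(3,1,1)$ case (where $g$ has three \emph{distinct} roots, one of them $a_2$) is \emph{not} captured. The vector $\Gamma(\lambda)$ is then obtained as a Gr\"obner basis of the elimination ideal $\big(\Upsilon(\lambda;\mu,a_2),\,\operatorname{disc}(g)\big)\cap\Complex[\lambda]$, not from subresultants of $f$ and $f'$.
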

\begin{proof}
Consider a curve \eqref{CurveHG2} with at least one double point at $(x,y)\,{=}\,(a_2,0)$. It has the form
\begin{equation}\label{CurveHG2Deg}
-y^2 + (x-a_2)^2 \big(x^3 + 2a_2 x^2 + \mu_4 x + \mu_6\big) = 0.
\end{equation}
By subtracting \eqref{CurveHG2Deg} from \eqref{CurveHG2} and collecting coefficients at the power of $x$
we find the following polynomials in $(\lambda _{10},\,\lambda _{8},\,\lambda _{6},\,
\lambda _{4};\,\mu_6,\,\mu_4,\,a_2)$:
\begin{equation}\label{UpsilonDef}
  \Upsilon(\lambda;\mu,a_2) = \begin{pmatrix} 
                     \lambda_4 -\big(\mu_4 - 3 a_2^2\big)\\
    \lambda_6 -\big(\mu_6 - 2 a_2 \mu_4 + 2a_2^3 \big)\\
    \lambda_8 -\big({-} 2a_2 \mu_6 + a_2^2 \mu_4\big) \\ 
    \lambda_{10} - a_2^2 \mu_6
                    \end{pmatrix}.
\end{equation}
The polynomials vanish whenever a curve \eqref{CurveHG2} has the form \eqref{CurveHG2Deg},
that is the curve has genus not greater than 1, equivalently $\lambda\,{\in}\,\Lambda_1\,{\cup}\,\Lambda_0$.
The polynomials $\Upsilon(\lambda;\mu,a_2)$ generate an ideal $I_{\Upsilon}\,{\subset}\,\Complex[\lambda;\mu,a_2]$.
Gr\"{o}bner basis of $I_{\Upsilon}\,{\cap}\,\Complex[\lambda]$ is $\Delta(\lambda)$.

If $\delta(\mu,a_2)\,{=}\,4 \big(\mu _4-\frac{4}{3}a_2^2\big)^3+27
   \big(\mu_6-\frac{2}{3}a_2 \mu _4+\frac{16 }{27}a_2^3\big)^2$ vanishes then the polynomial 
   $x^3 + 2a_2 x^2 + \mu_4 x + \mu_6$, cf.\,\eqref{CurveHG2Deg}, has a double root. This means
   the curve \eqref{CurveHG2Deg}
   has two double points and its genus is $0$, equivalently $\lambda\,{\in}\,\Lambda_0$. 
The polynomials $\Upsilon(\lambda;\mu,a_2)$ and $\delta(\mu,a_2)$ generate an 
ideal $I_{(\Upsilon,\,\delta)}\,{\subset}\,\Complex[\lambda;\mu,a_2]$.
Gr\"{o}bner basis of 
   $I_{(\Upsilon,\,\delta)}\,{\cap}\,\Complex[\lambda]$ is $\Gamma(\lambda)$.
   
To calculate Gr\"{o}bner bases we use B. Buchberger's method with lexicographic monomial order.
\end{proof}

\begin{remark}
The polynomial $\Delta(\lambda)$  is in fact the discriminant of 
$x^5 \,{+}\, \lambda_4 x^3 + \lambda_6 x^2 + \lambda_8 x + \lambda_{10}$, 
cf.\,\eqref{CurveHG2}, while the polynomial $\delta(\mu,a_2)$ is the discriminant of 
$x^3 + 2a_2 x^2 + \mu_4 x + \mu_6$, cf. \eqref{CurveHG2Deg}.
\end{remark}

Introduce variables $\gamma_4$, $\gamma_6$ by the formulas $\gamma_4 = \mu _4-\tfrac{4}{3}a_2^2$ and
$\gamma_6 = \mu_6-\tfrac{2}{3}a_2 \mu _4+\tfrac{16 }{27}a_2^3$.
Then the above polynomial $\delta(\mu,a_2)$ takes the form $\delta(\gamma)\,{=}\,4\gamma_4^3\,{+}\,27\gamma_6^2$. 
In what follows we shall need the following expressions
\begin{gather}
 \mu_4 =  \gamma_4 + \tfrac{4}{3} a_2^2,\qquad
 \mu_6 = \gamma_6 + \tfrac{2}{3}a_2 \gamma_4 + \tfrac{8}{27}a_2^3. \label{muSubs}
\end{gather}

Equations $\Upsilon(\lambda;\gamma,a_2)\,{=}\,0$ with respect to $(\gamma,\,a_2)$, 
here $\mu$ in \eqref{UpsilonDef} are replaced by $\gamma$
according to \eqref{muSubs}, have no solution 
when $\lambda\,{\in}\,\Lambda_2$, a unique solution for $(\gamma,\,a_2)$
when $\lambda\,{\in}\,\Lambda_1$, and two solutions when $\lambda\,{\in}\,\Lambda_0$.
Indeed, if $\Delta(\lambda)\,{\neq}\,0$ the equations are incompatible. 
Let $\Delta(\lambda)\,{=}\,0$, suppose there exist 
two distinct points $(\gamma,\,a_2)$ and $(\beta,\,b_2)$ corresponding to the same point $\lambda$.
Subtracting $\Upsilon(\lambda;\beta,b_2)\,{=}\,0$ from $\Upsilon(\lambda;\gamma,a_2)\,{=}\,0$ 
then eliminating $\gamma_4\,{-}\,\beta_4$ and $\gamma_6\,{-}\,\beta_6$ we come to a pair
of algebraic equations of order five and four with respect to $t\,{=}\,a_2\,{-}\,b_2$. These equations
have a single common root $t\,{=}0$ iff $\delta(\gamma)\,{\neq}\,0$, thus the points
$(\gamma,\,a_2)$ and $(\beta,\,b_2)$ coincide. Now suppose both $\Delta(\lambda)$ and
$\delta(\gamma)$ vanish, then $(\gamma_6,\,\gamma_4)\,{=}\,(2t^3,{-}3t^2)$ for some value of 
$t\,{\in}\,\Complex$. The system $\Upsilon(\lambda;(2t^3,{-}3t^2),a_2)\,{-}\,
\Upsilon(\lambda;(2s^3,{-}3s^2),b_2)\,{=}\,0$ is satisfied by two solutions: $(s,b_2)\,{=}\,(t,a_2)$
and $(s,b_2)\,{=}\,\big(\frac{2}{3}t+\tfrac{5}{9}a_2,\,t\,{-}\,\frac{2}{3}a_2\big)$.

\section{Frames in strata}\label{s:Frames}
To define a frame in the stratum $\Lambda_2$ we  use a theorem due to V.M.\;Zakalyukin \cite{Z1976},
see also \cite{G1980},  which 
puts into correspondence a vector field $L$ tangent to hypersurface $\Delta(\lambda)\,{=}\,0$ and a
polynomial $p(x,y)$, namely, 
\begin{equation*}
 L f(x,y) = p(x,y) f(x,y) \!\!\mod (\partial_x f,\,\partial_y f),
\end{equation*}
where $f(x,y)=0$ is a curve equation. In our case 
\begin{equation*}
 \Complex[x,y]/(\partial_x f,\,\partial_y f) = \spanOp_{\Complex} \big(1,\,x,\,x^2,\,x^3\big),
\end{equation*}
and four vector fields $\{\ell_0,\,\ell_2,\,\ell_4,\,\ell_6\}$ correspondent to the polynomials 
\begin{align*}
  &p_0(x,y) = 10,&
  &p_2(x,y) = 10 x,&\\
  &p_4(x,y) = 10 x^2 + 6\lambda_4,&
  &p_6(x,y) = 10 x^3 + 6\lambda_4 x + 4\lambda_6&
\end{align*}
provide a basis in $\Lambda_2$. Explicitly 
\begin{gather}
\begin{pmatrix}
 \ell_0\\ \ell_2\\ \ell_4\\ \ell_6
\end{pmatrix}
= V(\lambda)  \begin{pmatrix}
 \partial_{\lambda_4} \\ \partial_{\lambda_6} \\ \partial_{\lambda_8} \\
 \partial_{\lambda_{10}}
\end{pmatrix},\label{LC25}\\
\intertext{where}
V(\lambda) = \begin{pmatrix}
  4 \lambda_4 & 6 \lambda_6 & 8 \lambda_8 & 10 \lambda_{10} \\
 6 \lambda_6 & 8 \lambda_8-\frac{12}{5}\lambda_4^2 & 10
   \lambda _{10}-\frac{8}{5}\lambda_6 \lambda_4  &
   -\frac{4}{5} \lambda_8 \lambda_4  \\
 8 \lambda_8 & 10 \lambda_{10}-\frac{8}{5}\lambda_6\lambda_4  & 
 4 \lambda_8 \lambda_4 -\frac{12}{5}\lambda_6^2 & 
 6 \lambda_{10}\lambda_4 -\frac{6}{5} \lambda_8 \lambda_6 \\
 10 \lambda _{10} & -\frac{4}{5} \lambda_8 \lambda_4 & 
  6 \lambda_{10} \lambda_4 - \frac{6}{5} \lambda_8 \lambda_6  & 
  4 \lambda_{10} \lambda_6 -\frac{8}{5}\lambda_8^2 
 \end{pmatrix}.
\end{gather}

Vector fields $\ell\,{=}\,(\ell_0,\,\ell_2,\,\ell_4,\,\ell_6)$ are
tangent to discriminant variety $\{\lambda\,{\mid}\,\Delta(\lambda)=0\}
\cong \Lambda_1\cup\Lambda_0$, in fact,
\begin{gather}
 \ell_k \Delta(\lambda) = \phi_k \Delta(\lambda),\qquad \phi_k\in\Complex[\lambda],\quad k=0,\,2,\,4,\,6;
 \label{LtangentD}\\
 \phi = (40,\, 0,\, 12 \lambda_4,\,  4\lambda_6). \notag
\end{gather}
Vector fields $\ell$ are tangent to the variety $\{\lambda\,{\mid}\,\Gamma(\lambda)\,{=}\,0\}
\,{\cong}\,\Lambda_0$,
namely
\begin{gather}
 \ell_k \Gamma(\lambda) = \psi_k \Gamma(\lambda),\qquad 
 \psi_k\in \Mat(4; \Complex[\lambda]),\quad k=0,\,2,\,4,\,6; \label{LtangentDD}\\
 \psi_0 = \diag (16,\,18,\,20,\,24),\qquad
 \psi_2 = \begin{pmatrix}
           0 & -6 & 0 & 0 \\
 -\frac{116}{5}\lambda_4 & 0 & \frac{16}{5} & 0 \\
 27 \lambda_6 & -77 \lambda_4 & 0 & 0 \\
 72 \lambda_6 \lambda_4  & 240 \lambda_8-56 \lambda_4^2 & 0 & 0
          \end{pmatrix},\notag \\
  \psi_4 =  \begin{pmatrix}
 -\frac{32}{5}\lambda_4 & 0 & \frac{4}{5} & 0 \\
 \frac{33}{5}\lambda_6 & 5 \lambda_4 & 0 & 0 \\
 24 \lambda_8-\frac{432}{5}\lambda_4^2 & 0 & 12 \lambda_4 & -\frac{12}{5} \\
 144 \lambda_8 \lambda_4+108 \lambda_6^2 -\frac{176}{5}\lambda_4^3 & 0 & 0 & 
 \frac{44}{5}\lambda_4 
            \end{pmatrix},\notag \\
  \psi_6 =  \begin{pmatrix}
  -\frac{7}{5}\lambda _6 & -\frac{7}{5}\lambda _4 & 0 & 0  \\
 4 \lambda _8-\frac{128}{25}\lambda_4^2 & 0 & \frac{16}{25}\lambda _4 & 0 \\
 100 \lambda _{10}-\frac{81}{5}\lambda_6 \lambda_4 &
   -6 \lambda_8 - \frac{81}{5} \lambda_4^2 & 0 & 0 \\
 72 \lambda_8 \lambda_6 -\frac{48}{5} \lambda_6 \lambda _4^2  & 
 40 \lambda_8 \lambda_4 -\frac{48}{5} \lambda_4^3 & 0 & 0  
            \end{pmatrix}. \notag
\end{gather}

It follows from $\det V(\lambda)\,{=}\,\tfrac{16}{5}\Delta(\lambda)$ 
that $\ell$ defines a frame in the stratum $\Lambda_2$; next \eqref{LtangentD} and \eqref{LtangentDD}
imply that restrictions of $\ell$ to the strata $\Lambda_1$ and $\Lambda_0$ 
provide frames on the both strata.
To analyze the restrictions in more detail we need parameterization of $\Lambda_1$ and $\Lambda_0$.
By combining \eqref{CurveHG2Deg} with \eqref{muSubs} and
comparing with \eqref{CurveHG2} we observe that the subset of curves \eqref{CurveHG2} 
with one double point is parameterized as follows
\begin{equation}\label{CoefC25degSubs}
\begin{split}
&\lambda_4 = \gamma_4 - \tfrac{5}{3} a_2^2,\\
&\lambda_6 = \gamma_6 - \tfrac{4}{3}a_2 \gamma_4 - \tfrac{10}{27} a_2^3, \\
&\lambda_8 = -2a_2 \gamma_6  - \tfrac{1}{3} a_2^2 \gamma_4 + \tfrac{20}{27} a_2^4,\\
&\lambda_{10} = a_2^2 \gamma_6 + \tfrac{2}{3} a_2^3 \gamma_4 + \tfrac{8}{27}a_2^5.\\
& 4\gamma_4^3+27 \gamma_6^2 \neq 0.
\end{split}
\end{equation}

\begin{lemma}\label{L:VFdegC25}
The restricted vector fields $(\widetilde{\ell}_0,\,\widetilde{\ell}_2,\,\widetilde{\ell}_4)
\,{=}\,(\ell_0,\,\ell_2,\,\ell_4)|_{\Lambda_1}$ form a frame on 
the stratum~$\Lambda_1$. 
In~terms of parameterization \eqref{CoefC25degSubs} they are expressed as follows
\begin{align*}
 \widetilde{\ell}_0 =&\, 2a_2 \partial_{a_2} + 4\gamma_4 \partial_{\gamma_4} + 6\gamma_6 \partial_{\gamma_6},\\
 \widetilde{\ell}_2 =&\, \tfrac{2}{15}\big(6\gamma_4 + 5 a_2^2\big) \partial_{a_2} 
 + \tfrac{2}{3}\big(9\gamma_6 - 8 a_2 \gamma_4 \big)\partial_{\gamma_4} \,{-}\,
 \tfrac{4}{3}\big(\gamma_4^2 + 6 a_2\gamma_6 \big) \partial_{\gamma_6},\\
 \widetilde{\ell}_4 =&\, \tfrac{2}{45} 
 \big(27\gamma_6 + 9 a_2 \gamma_4 - 40 a_2^3\big) 
 \partial_{a_2} - \tfrac{4}{3}a_2 \big(9\gamma_6 + a_2 \gamma_4\big) \partial_{\gamma_4} 
 - \tfrac{2}{3}a_2 \big(3a_2 \gamma_6 - 4 \gamma_4^2 \big) \partial_{\gamma_6}.
\end{align*}
On $\Lambda_1$ the vector field $\ell_6$ is decomposed into
\begin{align}\label{L6C25deg}
 \ell_6|_{\Lambda_1}  =& {-}a_2^3 \widetilde{\ell}_0
 -a_2^2 \widetilde{\ell}_2 - a_2\widetilde{\ell}_4. 
\end{align}

\end{lemma}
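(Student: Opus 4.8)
The plan is to compute everything explicitly in the coordinates $(a_2,\gamma_4,\gamma_6)$ supplied by the parameterization \eqref{CoefC25degSubs}, since on $\Lambda_1$ these three quantities are global coordinates (the map $(a_2,\gamma_4,\gamma_6)\mapsto\lambda$ is injective there by the discussion following Proposition~\ref{P:LabmdaS}, the fibre over a point of $\Lambda_1$ being a single point). First I would express each of the four partial derivatives $\partial_{\lambda_4},\partial_{\lambda_6},\partial_{\lambda_8},\partial_{\lambda_{10}}$, restricted to $\Lambda_1$, in terms of $\partial_{a_2},\partial_{\gamma_4},\partial_{\gamma_6}$. Concretely, differentiating the four relations \eqref{CoefC25degSubs} gives the $4\times 3$ Jacobian $\partial\lambda/\partial(a_2,\gamma_4,\gamma_6)$; restricting a vector field $\ell_k=V(\lambda)\,(\partial_{\lambda_4},\partial_{\lambda_6},\partial_{\lambda_8},\partial_{\lambda_{10}})^t$ to the submanifold $\Lambda_1$ amounts to applying the chain rule, i.e. $\widetilde{\ell}_k$ is obtained by contracting the $k$-th row of $V(\lambda)$, evaluated on \eqref{CoefC25degSubs}, with that Jacobian. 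This is a finite polynomial computation and produces candidate expressions for $\widetilde{\ell}_0,\widetilde{\ell}_2,\widetilde{\ell}_4$; one then checks they agree with the formulas in the statement.

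Next I would verify that $(\widetilde{\ell}_0,\widetilde{\ell}_2,\widetilde{\ell}_4)$ is a frame on $\Lambda_1$, i.e. that the three vector fields are pointwise linearly independent. From the explicit expressions one forms the $3\times 3$ matrix of their components in the basis $(\partial_{a_2},\partial_{\gamma_4},\partial_{\gamma_6})$ and computes its determinant; a short calculation should give something proportional to $(4\gamma_4^3+27\gamma_6^2)$ times a nonzero numerical constant (possibly times a power of $a_2$ that I expect to cancel or to be absorbable), which is nonvanishing on $\Lambda_1$ precisely by the last inequality in \eqref{CoefC25degSubs}. Alternatively, and more in the spirit of the paper, one can argue abstractly: $\ell$ is a frame on $\Lambda_2$ since $\det V=\tfrac{16}{5}\Delta$, and $\ell$ is tangent to $\Lambda_1\cup\Lambda_0$ by \eqref{LtangentD}; the codimension-one stratum $\Lambda_1$ inherits a frame from any three of the $\ell_k$ whose restrictions remain independent, and \eqref{L6C25deg} (proved next) shows $\ell_6|_{\Lambda_1}$ is dependent on the other three, so the other three must be independent — otherwise the rank of $\ell|_{\Lambda_1}$ would drop below $3=\dim\Lambda_1$, contradicting tangency plus the frame property just off the stratum. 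I would present the direct determinant computation as the primary argument.

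For the decomposition \eqref{L6C25deg}, the cleanest route is to compute $\widetilde{\ell}_6:=\ell_6|_{\Lambda_1}$ by exactly the same chain-rule procedure as above (contract the last row of $V(\lambda)|_{\eqref{CoefC25degSubs}}$ with the Jacobian), obtaining its three components as polynomials in $(a_2,\gamma_4,\gamma_6)$, and then verify the identity ${-}a_2^3\widetilde{\ell}_0-a_2^2\widetilde{\ell}_2-a_2\widetilde{\ell}_4$ reproduces them component by component — three scalar polynomial identities in $\Complex[a_2,\gamma_4,\gamma_6]$. Since $(\widetilde{\ell}_0,\widetilde{\ell}_2,\widetilde{\ell}_4)$ is already known to be a frame, it suffices to check the $\partial_{a_2}$, $\partial_{\gamma_4}$ and $\partial_{\gamma_6}$ coefficients match; each is a routine expansion. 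The main obstacle is bookkeeping rather than conceptual: the entries of $V(\lambda)$ are quadratic in the $\lambda_i$, the $\lambda_i$ are quintic in $(a_2,\gamma_4,\gamma_6)$, so the intermediate polynomials before simplification are of fairly high degree and must be collected carefully; organizing the computation by Sat\={o} weight (every $\widetilde{\ell}_k$ must be weight-homogeneous of weight $k$ in the grading $\deg a_2=-1$, $\deg\gamma_j=j$, $\deg\partial_{a_2}=1$, $\deg\partial_{\gamma_j}=-j$) is the main tool for catching arithmetic errors and for predicting which monomials can appear.
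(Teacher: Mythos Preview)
Your plan is correct and matches the paper's approach exactly: the paper's entire proof reads ``The proof is straightforward,'' and what you describe is precisely that straightforward computation written out. Two small corrections: the restriction is obtained by \emph{solving} the overdetermined linear system $J\cdot(c_{a_2},c_{\gamma_4},c_{\gamma_6})^t=(V_{k,4},V_{k,6},V_{k,8},V_{k,10})^t$ with $J=\partial\lambda/\partial(a_2,\gamma_4,\gamma_6)$ (consistent by tangency \eqref{LtangentD}), not by the row--matrix product $V_k\cdot J$ your wording suggests; and the Sat\={o} weight of $a_2$ is $2$ (it is an $x$-coordinate), not $-1$.
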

\begin{proof}
 The proof is straightforward.
\end{proof}

\begin{remark}\label{R:Ldefs}
The vector fields $(\widetilde{\ell}_0,\,\widetilde{\ell}_2,\,\widetilde{\ell}_4)$ 
on a curve \eqref{CurveHG2} 
with a double point at $(a_2,0)$ can be expressed in terms of
the three vector fields: $\partial_{a_2}$, $L_0\,{=}\,4\gamma_4 \partial_{\gamma_4} \,{+}\,
6\gamma_6 \partial_{\gamma_6}$, and $L_2\,{=}\,6\gamma_6 \partial_{\gamma_4} \,{-}\,
\tfrac{4}{3} \gamma_4^2 \partial_{\gamma_6}$ as follows
\begin{equation}\label{VFdegC25}
\begin{split}
 \widetilde{\ell}_0 =& 2a_2 \partial_{a_2}  + L_0,\\
 \widetilde{\ell}_2 =&  \tfrac{2}{15}\big(6\gamma_4 + 5 a_2^2\big) \partial_{a_2} 
 - \tfrac{4}{3} a_2 L_0 + L_2 ,\\
 \widetilde{\ell}_4 =& \tfrac{2}{45} 
 \big(27\gamma_6 + 9 a_2 \gamma_4 - 40 a_2^3\big)
 \partial_{a_2} - \tfrac{1}{3}a_2^2 L_0 - 2 a_2 L_2 .
 \end{split}
\end{equation}
The fields $L_0$, $L_2$ are tangent to the variety 
$\{\gamma\,{\mid}\, \delta(\gamma)\,{=}\,0\}$.
\end{remark}

In a similar way, from the generic form of a curve \eqref{CurveHG2} with two double 
points at $(a_2,0)$ and $(b_2,0)$
\begin{equation}\label{CurveHG2DDeg}
- y^2 + (x-a_2)^2 (x-b_2)^2 (x+2a_2+2b_2)  = 0
\end{equation}
we obtain a parameterization of $\Lambda_0$
\begin{equation}\label{CoefC25ddegSubs}
\begin{split}
 &\lambda_4 = -3a_2^2 - 4a_2 b_2 - 3b_2^2,\\
 &\lambda_6 = 2(a_2+b_2)\big(a_2^2+3a_2 b_2+b_2^2\big), \\
 &\lambda_8 = -a_2 b_2 \big(4a_2^2+ 7 b_2 a_2 + 4b_2^2\big),\\
 &\lambda_{10} = 2 a_2^2 b_2^2 (a_2+b_2).
\end{split}
\end{equation}
\begin{lemma}
The restricted vector fields $(\widetilde{\ell}_0,\,\widetilde{\ell}_2)
\,{=}\,(\ell_0,\,\ell_2)|_{\Lambda_0}$ form a frame on 
the stratum~$\Lambda_0$. 
In~terms of parameterization \eqref{CoefC25ddegSubs} they are expressed as follows
\begin{align*}
 \widetilde{\ell}_0 =&\, 2a_2 \partial_{a_2} + 2b_2 \partial_{b_2},\\
 \widetilde{\ell}_2 =&  -\tfrac{2}{5}\big(a_2^2 + 8a_2 b_2 + 6 b_2^2\big) \partial_{a_2} 
 -\tfrac{2}{5}\big(6a_2^2 + 8a_2 b_2 + b_2^2\big) \partial_{b_2}.
\end{align*}
On $\Lambda_0$ the vector fields $\ell_4$ and $\ell_6$ are decomposed into
\begin{subequations}\label{L46C25DDeg}
 \begin{align}
 \ell_4|_{\Lambda_0} =&\, {-} (a_2^2 + a_2 b_2 + b_2^2)\widetilde{\ell}_0 
 - (a_2 + b_2)\widetilde{\ell}_2\\
 \ell_6|_{\Lambda_0} =&\, a_2 b_2 (a_2+b_2)\widetilde{\ell}_0
 + a_2 b_2 \widetilde{\ell}_2. 
\end{align}
\end{subequations}
\end{lemma}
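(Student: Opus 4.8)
The plan is to mimic, \emph{mutatis mutandis}, the proof strategy employed for Lemma~\ref{L:VFdegC25} on the stratum $\Lambda_1$: everything reduces to a direct substitution of the explicit parameterization \eqref{CoefC25ddegSubs} into the operators $\ell_0,\ell_2,\ell_4,\ell_6$ given by \eqref{LC25} and the matrix $V(\lambda)$. First I would compute the Jacobian of the map $(a_2,b_2)\mapsto\lambda$ defined by \eqref{CoefC25ddegSubs}, i.e. the $4\times 2$ matrix $\big(\partial\lambda_i/\partial a_2,\ \partial\lambda_i/\partial b_2\big)$, so that for any vector field $\ell_k = \sum_i v_{ki}(\lambda)\,\partial_{\lambda_i}$ the restriction to $\Lambda_0$ is obtained by the chain rule: $\ell_k|_{\Lambda_0}$ acts on functions of $(a_2,b_2)$ through the composition of this Jacobian with the coefficient row $v_k = (v_{k4},v_{k6},v_{k8},v_{k10})$ evaluated along \eqref{CoefC25ddegSubs}. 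Concretely, one must check that the resulting coefficients of $\partial_{a_2}$ and $\partial_{b_2}$ are the ones displayed for $\widetilde\ell_0$ and $\widetilde\ell_2$, and that for $\ell_4$ and $\ell_6$ the resulting vector in the span of $\partial_{a_2},\partial_{b_2}$ equals the stated linear combination of $\widetilde\ell_0,\widetilde\ell_2$ with polynomial coefficients in $(a_2,b_2)$.

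To make the ``frame'' claim precise, I would then verify that $\widetilde\ell_0$ and $\widetilde\ell_2$ are linearly independent at every point of $\Lambda_0$, equivalently that the $2\times 2$ determinant
\begin{equation*}
\det\begin{pmatrix} 2a_2 & 2b_2 \\ -\tfrac{2}{5}(a_2^2+8a_2b_2+6b_2^2) & -\tfrac{2}{5}(6a_2^2+8a_2b_2+b_2^2)\end{pmatrix}
= -\tfrac{4}{5}\,a_2\big(6a_2^2+8a_2b_2+b_2^2\big) + \tfrac{4}{5}\,b_2\big(a_2^2+8a_2b_2+6b_2^2\big)
\end{equation*}
does not vanish identically; simplifying the right-hand side gives a multiple of $(a_2-b_2)(\text{quadratic})$, and one must confirm that the locus where it vanishes is exactly the boundary of $\Lambda_0$ (the points where the two double points collide or the curve degenerates further), so that on the open stratum the two fields are indeed a basis of the tangent space. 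The decomposition formulas \eqref{L46C25DDeg} then follow by comparing the $(a_2,b_2)$-coefficients obtained from restricting $\ell_4,\ell_6$ with the corresponding combinations $-(a_2^2+a_2b_2+b_2^2)\widetilde\ell_0-(a_2+b_2)\widetilde\ell_2$ and $a_2b_2(a_2+b_2)\widetilde\ell_0+a_2b_2\widetilde\ell_2$; this is purely a check of two pairs of polynomial identities in $a_2,b_2$.

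As with the preceding lemma, no step involves a genuine conceptual difficulty — the author's ``The proof is straightforward'' applies here as well — and the only real labor is bookkeeping: carrying out the chain-rule substitution for four operators through a $4\times 2$ Jacobian whose entries are cubics and quartics in $(a_2,b_2)$, and then collecting like terms. The step I expect to be the most error-prone, and hence the one I would organize most carefully (for instance by first recording the Jacobian matrix as a standalone display), is the verification of \eqref{L46C25DDeg}, since there the restricted coefficients of $\ell_4$ and $\ell_6$ are degree-five and degree-six polynomials that must match the products of the lower-degree coefficients of $\widetilde\ell_0,\widetilde\ell_2$ with the stated quadratic and cubic multipliers; a single sign slip in $V(\lambda)$ or in $\partial\lambda_i/\partial b_2$ would break the identity. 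One can cross-check the outcome independently using the already-established tangency relations \eqref{LtangentDD}: since $\ell_4,\ell_6$ are tangent to $\{\Gamma(\lambda)=0\}\cong\Lambda_0$, their restrictions must lie in the span of any frame on $\Lambda_0$, which both confirms that a decomposition of the asserted form exists and pins down its coefficients uniquely.
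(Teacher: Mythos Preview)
Your plan is correct and matches the paper's own approach, which is simply stated as ``The proof is straightforward'' with no further detail; the direct substitution of \eqref{CoefC25ddegSubs} into the rows of $V(\lambda)$ and solving the overdetermined linear system for the $(\partial_{a_2},\partial_{b_2})$-coefficients (consistent by the tangency \eqref{LtangentDD}) is exactly what is intended. Your determinant check factors as $-\tfrac{4}{5}(a_2-b_2)(2a_2+3b_2)(3a_2+2b_2)$, whose zero locus corresponds precisely to the further degenerations $(4,1)$ and $(3,2)$ in Table~\ref{T:Partitions}, confirming that $\widetilde\ell_0,\widetilde\ell_2$ are independent on the open $(2,2,1)$-part of $\Lambda_0$.
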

\begin{proof}
 The proof is straightforward.
\end{proof}

\section{Annihilators of sigma-function}\label{s:HeatEqs}
Following \cite{BL2004}, we write down the operators producing
heat equations in a non-holonomic frame in the case of genus $2$ curve \eqref{CurveHG2}:
\begin{subequations}
\begin{align*}
 q_0 =& -u_1 \partial_{u_1} - 3u_3\partial_{u_3} + 3 + \ell_0,\\
 q_2 =& -\tfrac{1}{2} \partial_{u_1 u_1} + \tfrac{4}{5}\lambda_4 u_3 \partial_{u_1}
 - u_1 \partial_{u_3} + \tfrac{3}{10}\lambda_4 u_1^2 
 - \tfrac{1}{10} \big(15\lambda_8 - 4\lambda_4^2\big)u_3^2 + \ell_2,\\
 q_4 =& -\partial_{u_1 u_3} + \tfrac{6}{5}\lambda_6 u_3 \partial_{u_1}
 - \lambda_4 u_3 \partial_{u_3} + \tfrac{1}{5}\lambda_6 u_1^2 - \lambda_8 u_1 u_3 
 \\ &- \tfrac{1}{10}\big(30\lambda_{10}-6\lambda_6 \lambda_4\big)u_3^2 +\lambda_4 + \ell_4,\notag\\
 q_6 =& -\tfrac{1}{2} \partial_{u_3 u_3} + \tfrac{3}{5}\lambda_8 u_3 \partial_{u_1} 
 + \tfrac{1}{10}\lambda_8 u_1^2 - 2\lambda_{10} u_1 u_3 + \tfrac{3}{10}\lambda_8 \lambda_4 u_3^2 
 \\ &  + \tfrac{1}{2}\lambda_6 + \ell_6. \notag
\end{align*}
\end{subequations}
We define sigma-function $\bm{\sigma}(u_3,\,u_1;\,\lambda)$ on genus 2 curve \eqref{CurveHG2} as 
a solution of the equations
\begin{equation*}
 q_k \bm{\sigma}(u_3,\,u_1;\,\lambda) = 0,\quad k=0,\,2,\,4,\,6
\end{equation*}
with the initial condition $\bm{\sigma}(u_3,\,0;\,0) \,{=}\, u_3 $. 
Since the solution is unique \cite{BL2004}, this completely defines the sigma-function.

According to relation \eqref{L6C25deg} 
from Lemma~\ref{L:VFdegC25} the operator $Q_6=-2(q_6+a_2 q_4+a_2^2 q_2
+a_2^3 q_0)|_{\Lambda_1}$ does not include derivatives over $\gamma$ and $a_2$,
namely:
\begin{equation}\label{Q6Def}
 Q_6 =  \Big(\partial_{u_3} + a_2 \partial_{u_1} + 
 a_2^2 u_1 + (\gamma_4 + \tfrac{4}{3} a_2^2) a_2 u_3\Big)^2 
 -\big(\gamma_6 + \tfrac{5}{3}a_2 \gamma_4 + \tfrac{125}{27}a_2^3\big).
\end{equation}
Introduce a new variable $U_1$ by the formula $u_1\,{=}\,U_1\,{+}\,a_2 u_3$, then
\eqref{Q6Def} becomes an ordinary differential operator
\begin{equation}\label{Q6SMpl}
 Q_6 =  D^2 - d(a_2,\gamma)^2
\end{equation}
with the operator $D$ and the function $d(a_2,\gamma)$ given by
\begin{align*}
 &D = \partial_{u_3} + 
 a_2^2 U_1 + (\gamma_4 + \tfrac{7}{3} a_2^2) a_2 u_3
 &d(a_2,\gamma)^2 = \gamma_6 + \tfrac{5}{3}a_2 \gamma_4 + (\tfrac{5}{3}a_2)^3.
\end{align*}
The operator 
$Q_4\,{=}\,{-}(q_4\,{+}\,2a_2 q_2\,{+}\,3a_2^2 q_0)|_{\Lambda_1}$ has the form
\begin{multline}\label{Q4Def}
 Q_4 =  \Big(\partial_{U_1} + 
 2 a_2 U_1 + (\gamma_4 + \tfrac{28}{3} a_2^2) u_3\Big) D
 -\tfrac{6}{5} d(a_2,\gamma) \partial_{a_2}\big(d(a_2,\gamma)\, \cdot\,\big)
 \\ - \tfrac{1}{5} \big(U_1^2 + 12 a_2 U_1 u_3 + 3(\gamma_4 + 7a_2^2) u_3^2 \big) d(a_2,\gamma)^2.
\end{multline}

Then $Q_0\,{=}\,q_0|_{\Lambda_1}$ and $Q_2\,{=}\,q_2 \,{+}\, \tfrac{4}{3} a_2 q_0|_{\Lambda_1}$
take the form
\begin{subequations}\label{QopsC25deg}
\begin{align}
 Q_0 =& -U_1 \partial_{U_1} - 3u_3\partial_{u_3} + 2 a_2 \partial_{a_2} + L_0 + 3,\\
 Q_2 =& -\tfrac{1}{2} \partial_{U_1 U_1} 
 - \tfrac{1}{3}  a_2 \big(U_1 + 3 a_2 u_3\big) \partial_{U_1}
 - \big(U_1 + 5 a_2 u_3 \big) \partial_{u_3} 
 \\ & +  \tfrac{2}{15}\big(6\gamma_4 + 25 a_2^2\big) \partial_{a_2}
 + L_2 + \tfrac{1}{10}\big(3\gamma_4 - 5a_2^2\big) \big(U_1 + 2 a_2 u_3\big) U_1 \notag \\ & + 
  \tfrac{1}{30}\big(90 a_2 \gamma_6  + 12\gamma_4^2 - 16a_2^2 \gamma_4 
 - 15 a_2^4\big) u_3^2 + 4a_2. \notag
\end{align}
\end{subequations}

A solution $\mathcal{Z}(u_3,U_1,a_2, \gamma)$ of the system
$$Q_k \mathcal{Z} = 0,\quad k=0,\,2,\,4,\,6,\qquad
\mathcal{Z}(u_3,0,0,0) = u_3$$
at $U_1\,{=}\,u_1\,{-}\,a_2 u_3$ is a degenerate 
sigma-function and coincides with $\bm{\sigma}(u_3,\,u_1;\,\lambda)$ restricted to $\Lambda_1$.
We construct this solution explicitly in the next section.

\section{Degenerate sigma-function}\label{s:DegenSigma}

\begin{theorem}\label{Th:1}
Suppose $\lambda\in \Lambda_1$. 
Sigma-function associated with a curve \eqref{CurveHG2} has the form
\begin{gather}\label{SolG2}
\begin{split}
\bm{\sigma}(u_3,u_1,\lambda)&|_{\Lambda_1} =
\frac{\rme^{-\tfrac{3}{5} \wp(\alpha)  \Big( 
\big(\tfrac{1}{2}\gamma_4 + \tfrac{3}{25} \wp(\alpha)^2\big)u_3^2 + \tfrac{2}{5} \wp(\alpha) u_1 u_3 
+ \tfrac{1}{6} u_1^2 \Big)}}{ \wp'(\alpha) \sigma (\alpha)} \times \\ &\times
\bigg(\sigma \big(\alpha + u_1 - \tfrac{3}{5}\wp(\alpha) u_3\big)
\rme^{\tfrac{1}{2} \wp'(\alpha) u_3  - \zeta (\alpha)\big(u_1 - \tfrac{3}{5}\wp(\alpha) u_3\big)}
\\ &\quad - \sigma \big(\alpha - u_1 + \tfrac{3}{5}\wp(\alpha) u_3\big) 
\rme^{-\tfrac{1}{2} \wp'(\alpha) u_3  + \zeta (\alpha)\big(u_1 - \tfrac{3}{5}\wp(\alpha) u_3\big)} \bigg),
\end{split}
\end{gather}
where  $\sigma$, $\zeta$, $\wp$ are Weierstrass functions associated with the curve 
\eqref{CurveHG1}, and $\alpha$ is defined by $\wp(\alpha) = \tfrac{5}{3} a_2$.
\end{theorem}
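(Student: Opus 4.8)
{\bf Proof proposal.}

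The plan is to verify directly that the right-hand side of \eqref{SolG2} is annihilated by the four operators $Q_0,Q_2,Q_4,Q_6$ of Section~\ref{s:HeatEqs} and satisfies the stated initial condition; uniqueness of the solution then forces equality with $\bm{\sigma}|_{\Lambda_1}$. It is convenient to work in the variable $U_1=u_1-a_2 u_3$ (equivalently $U_1=u_1-\tfrac{3}{5}\wp(\alpha)u_3$, since $\wp(\alpha)=\tfrac{5}{3}a_2$), in which the system takes the reduced form \eqref{Q6SMpl}, \eqref{Q4Def}, \eqref{QopsC25deg}. Write the proposed solution as $\mathcal{Z}=e^{F}\,G/\big(\wp'(\alpha)\sigma(\alpha)\big)$, where $F$ is the quadratic exponent in $(u_3,u_1)$ and $G=\sigma(\alpha+U_1)e^{\tfrac12\wp'(\alpha)u_3-\zeta(\alpha)U_1}-\sigma(\alpha-U_1)e^{-\tfrac12\wp'(\alpha)u_3+\zeta(\alpha)U_1}$. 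The first — and essentially only conceptual — step is to identify the ODE that produces $G$: because $Q_6=D^2-d(a_2,\gamma)^2$ is an ordinary operator in $u_3$ after the change of variable, and because the genus-1 sigma-quotient $\sigma(\alpha\pm U_1)e^{\mp\zeta(\alpha)U_1}$ is the classical Baker--Akhiezer-type solution of $(\partial_{U_1}^2-\wp(U_1)-\wp(\alpha)\text{-shift})$, one expects the two exponential summands in $G$ to be the two Floquet solutions of a second-order equation with exponents $\pm\tfrac12\wp'(\alpha)$; the antisymmetric combination is forced by the initial condition. Establishing this is mostly a matter of invoking the standard Weierstrass identities
$$\frac{\sigma(\alpha+v)\sigma(\alpha-v)}{\sigma(\alpha)^2\sigma(v)^2}=\wp(v)-\wp(\alpha),\qquad
\zeta(\alpha+v)-\zeta(\alpha)-\zeta(v)=\tfrac12\frac{\wp'(\alpha)-\wp'(v)}{\wp(\alpha)-\wp(v)},$$
together with $\wp'(\alpha)^2=4\wp(\alpha)^3-g_2\wp(\alpha)-g_3$ and $(g_2,g_3)=(-4\gamma_4,-4\gamma_6)$, which ties $\wp'(\alpha)^2$ to $d(a_2,\gamma)^2=\gamma_6+\tfrac53 a_2\gamma_4+\big(\tfrac53 a_2\big)^3$ up to the constant factors dictated by the weights.

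Next I would check $Q_0\mathcal{Z}=0$: this is the Euler/homogeneity operator, and since every ingredient respects the Sat\=o weights — $\deg\bm\sigma=-3$, $\deg\sigma(v)=-1$, $\deg\wp=2$, $\deg\zeta=1$, $\deg\wp'=3$, $\deg u_3=-3$, $\deg u_1=-1$, $\deg\alpha=-1$ (so that $\deg\wp(\alpha)=2=\deg a_2+0$ with $\deg a_2=2$) — it suffices to confirm that the prefactor $1/\big(\wp'(\alpha)\sigma(\alpha)\big)$ carries weight $-3-(-1)=\dots$; more precisely, that $F$ is weight-$0$, $G$ is weight-$(-1)$, and $\wp'(\alpha)\sigma(\alpha)$ is weight-$2$, so the quotient has weight $-3$ as required. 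Then $Q_2\mathcal{Z}=0$ and $Q_4\mathcal{Z}=0$ are the substantive computations. The strategy for each is the same: substitute $\mathcal{Z}=e^{F}\,\widetilde G$ (with $\widetilde G=G/(\wp'(\alpha)\sigma(\alpha))$), use that conjugating a differential operator by $e^{F}$ shifts its derivatives by $\partial F$, and reduce everything to derivatives of $G$ in $u_3$ and $U_1$ plus derivatives in $a_2$ (equivalently in $\alpha$ via $\wp(\alpha)=\tfrac53 a_2$). The $u_3$- and $U_1$-derivatives of the two summands of $G$ are eliminated using the second-order ODE from the first step and the heat relation $\partial_{u_3}(\text{summand})=\pm\tfrac12\wp'(\alpha)(\text{summand})+(\text{lower in }U_1)$; the $\alpha$-derivatives bring in $\wp'(\alpha),\wp''(\alpha)=6\wp(\alpha)^2-\tfrac12 g_2$, and $\partial_\alpha\zeta(\alpha)=-\wp(\alpha)$, $\partial_\alpha\sigma(\alpha)=\zeta(\alpha)\sigma(\alpha)$. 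After collecting the coefficients of the two independent exponentials $e^{\pm\zeta(\alpha)U_1}$ separately, one is left in each case with a pair of polynomial identities in $\wp(\alpha)$ (with coefficients polynomial in $u_3,U_1,\gamma_4$) that must vanish identically; these collapse on using $\wp'(\alpha)^2=4\wp(\alpha)^3+4\gamma_4\wp(\alpha)+4\gamma_6$ and $\wp''(\alpha)=6\wp(\alpha)^2+2\gamma_4$. Finally, the initial condition: setting $\lambda=0$ forces $a_2=0$, hence $\wp(\alpha)\to0$, i.e. $\alpha\to\infty$; a short Laurent expansion ($\wp(\alpha)\sim\alpha^{-2}$, $\wp'(\alpha)\sim-2\alpha^{-3}$, $\sigma(\alpha)\sim\alpha$, $\sigma(\alpha\pm U_1)\sim(\alpha\pm U_1)$, $\zeta(\alpha)\sim\alpha^{-1}$, with $\gamma_4,\gamma_6\to0$) shows the bracket behaves like $\big(\alpha+U_1\big)\big(1+\tfrac12\wp'(\alpha)u_3-\dots\big)-\big(\alpha-U_1\big)\big(1-\dots\big)\sim 2U_1+\alpha\wp'(\alpha)u_3+O(\alpha^{-1})\to -2u_3$ in the limit after dividing by $\wp'(\alpha)\sigma(\alpha)\sim -2$, giving $\mathcal{Z}\to u_3$ as needed (with $U_1\to u_1\to 0$ in the initial data).

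The main obstacle is the bookkeeping in $Q_2\mathcal{Z}=0$ and especially $Q_4\mathcal{Z}=0$: the latter contains the awkward term $-\tfrac65 d(a_2,\gamma)\,\partial_{a_2}\big(d(a_2,\gamma)\,\cdot\,\big)$, so one must differentiate the whole ansatz — including $\wp'(\alpha)$, $\sigma(\alpha)$, $\zeta(\alpha)$ and the $\wp(\alpha)$-dependent exponent $F$ — with respect to $a_2$, which is where the chain rule $\partial_{a_2}=\tfrac{3}{5\wp'(\alpha)}\partial_\alpha$ and the second-order ODE for $\wp$ must be used carefully to keep the expressions rational in $\wp(\alpha)$. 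A clean way to control this is to first prove the lemma that $G$ satisfies, as a function of $(u_3,U_1)$ with $\alpha$ a parameter, the \emph{same} heat hierarchy $Q_6^{(1)}G=Q_4^{(1)}G=Q_2^{(1)}G=Q_0^{(1)}G=0$ for the ``genus-$1$-embedded'' operators obtained by dropping the $a_2$-derivative parts, then absorb the remaining $a_2$-transport terms into the exponential prefactor $e^{F}$ by choosing $F$ so that $\partial_{a_2}$ acting through $e^F$ exactly cancels them — this is, in effect, a reverse-engineering of the formula, and verifying it is reduced to the two polynomial-in-$\wp(\alpha)$ identities mentioned above. With the genus-$1$ ODE and the classical $\sigma,\zeta,\wp$ addition formulas in hand, none of the remaining steps is deep, only lengthy.
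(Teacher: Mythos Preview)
Your strategy --- direct verification that the right-hand side of \eqref{SolG2} is killed by $Q_0,Q_2,Q_4,Q_6$ and matches the initial data --- is logically sound in principle and genuinely different from the paper's argument. The paper does not verify; it \emph{derives}. It treats $Q_6\mathcal Z=0$ as an ODE in $u_3$ (after the gauge \eqref{Zansatz}) to produce the two exponentials $e^{\pm d\,u_3}$, then feeds the ansatz into $Q_4\mathcal Z=0$, which becomes a first-order linear PDE \eqref{Q4eqG2} for $\varphi_\epsilon$. Solving that by characteristics is what \emph{introduces} $\alpha$ (as the elliptic integral with $\wp(\alpha)=\tfrac53 a_2$) and forces the $\zeta(\alpha)U_1+\log\sigma(\alpha)$ pieces. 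Finally, $Q_0$ and $Q_2$ collapse to the Weierstrass annihilator equations \eqref{Q0sigmaG1}--\eqref{Q2sigmaG1} for the undetermined functions $s_\epsilon$, pinning $s_\epsilon=\epsilon\sigma$. Your route trades this structural derivation for brute computation; that is a fair bargain, but only if the computation actually closes.

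It does not, as outlined, because you never handle the $\gamma$-derivatives. The operators $Q_0$ and $Q_2$ in \eqref{QopsC25deg} contain not only $\partial_{a_2}$ but also $L_0=4\gamma_4\partial_{\gamma_4}+6\gamma_6\partial_{\gamma_6}$ and $L_2=6\gamma_6\partial_{\gamma_4}-\tfrac43\gamma_4^2\partial_{\gamma_6}$. Every Weierstrass function in your $G$ and $F$ depends on $(\gamma_4,\gamma_6)$, so applying $Q_2$ produces terms like $L_2\sigma(\alpha+U_1)$, $L_2\zeta(\alpha)$, $L_2\wp(\alpha)$, $L_2\wp'(\alpha)$ that you cannot reduce using only the $\alpha$-derivative identities you list ($\partial_\alpha\sigma=\zeta\sigma$, etc.). The paper deals with this by the block of four identities displayed just before \eqref{Q2sigmaG1}; equivalently, one must invoke the genus-$1$ heat equations \eqref{WeierAnOP}. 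Your sketch says ``reduce everything to derivatives of $G$ in $u_3$ and $U_1$ plus derivatives in $a_2$'' and later ``absorb the remaining $a_2$-transport terms into $e^F$'': the $L_2$-transport is neither mentioned nor absorbable by $e^F$ (since $F$ is polynomial in $u$ with coefficients in $\wp(\alpha),\gamma_4$ and $L_2 F$ does not match the shape of $L_2 G/G$). Without those $L_k$-on-Weierstrass identities --- which are exactly the statement that $\sigma$ solves the Weierstrass system --- the $Q_2$ check cannot be completed, and your proposed ``lemma'' about genus-$1$-embedded operators obtained by ``dropping the $a_2$-derivative parts'' is too vague to fill the hole (if you also drop $L_0,L_2$ there is nothing left to kill those terms; if you keep them you are back to needing the Weierstrass heat equations). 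The initial-condition analysis is salvageable but your asymptotics are off: $\wp'(\alpha)\sigma(\alpha)\sim-2\alpha^{-2}$, not $-2$, and the bracket at $U_1=0$ is $\sigma(\alpha)\wp'(\alpha)u_3+O(u_3^3)$, so the ratio is indeed $u_3$, though not for the reason you wrote.
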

\begin{proof}
First, we consider the equation
\begin{equation*}
 Q_6 \mathcal{Z}(u_3,U_1,a_2,\gamma) = 0
\end{equation*}
where $Q_6$ is defined by \eqref{Q6SMpl}. The gauge transformation
\begin{equation}\label{Zansatz}
\mathcal{Z}(u_3,U_1,a_2,\gamma) =
\exp\Big\{{-}\tfrac{1}{2}a_2\big(\gamma_4 +\tfrac{7}{3}a_2^2\big)u_3^2 - a_2^2 U_1 u_3 \Big\}
\rho(u_3,U_1,a_2,\gamma)
\end{equation}
leads to a simpler equation
$$\partial_{u_3 u_3} \rho(u_3,U_1,a_2,\gamma) - d(a_2,\gamma)^2 \rho(u_3,U_1,a_2,\gamma) = 0.$$
As fundamental solutions of the equation we choose 
$c_{\epsilon}(U_1,a_2,\gamma)\exp (\epsilon d(a_2,\gamma)u_3)$,
where $\epsilon$ is unary operator: $\epsilon\,{=}\,\pm $. Then
\begin{equation}\label{rho}
 \rho(u_3,U_1,a_2,\gamma) = {c}_{+}(U_1,a_2,\gamma) \rme^{ u_3 d(a_2,\gamma)}
+ {c}_{-}(U_1,a_2,\gamma) \rme^{- u_3 d(a_2,\gamma)}.
\end{equation}

Next, consider the equation
\begin{equation*}
 Q_4 \mathcal{Z}(u_3,U_1,a_2,\gamma) = 0.
\end{equation*}
Taking into account \eqref{Zansatz} and \eqref{rho} we obtain the following equations for $c_{\epsilon}$
\begin{equation*}
 \epsilon \partial_{U_1} c_{\epsilon}   - \tfrac{6}{5} \partial_{a_2} \big(d(a_2,\gamma) c_{\epsilon} \big)
 = \Big({-} 2\epsilon a_2  U_1 + \tfrac{1}{5} d(a_2,\gamma) U_1^2 \Big) c_{\epsilon}. 
\end{equation*}
The substitution 
$$c_{\epsilon}(U_1,a_2,\gamma)\,{=}\,\exp\big\{\varphi_{\epsilon}(U_1,a_2,\gamma)\big\}/d(a_2,\gamma)$$
leads to a linear non-homogeneous PDE
\begin{equation}\label{Q4eqG2}
\big(\epsilon \partial_{U_1}  - \tfrac{6}{5} d(a_2,\gamma) \partial_{a_2}\big) \varphi_{\epsilon} 
 = - 2 \epsilon a_2 U_1 + \tfrac{1}{5} d(a_2,\gamma) U_1^2.
\end{equation}
We solve the associated homogeneous equation by the method of characteristics:
$$-\epsilon \dd U_1 = \frac{5}{6} \frac{\dd a_2}{d(a_2,\gamma)}
= \frac{\dd \big(\frac{5}{3}a_2\big)}
{-2\sqrt{\big(\tfrac{5}{3}a_2\big)^3+\tfrac{5}{3}a_2 \gamma_4 +\gamma_6}}.$$
The characteristics is defined by the equation
\begin{equation*}
 \alpha (a_2,\gamma) + \epsilon U_1  = \const,
\end{equation*}
where
\begin{equation*}
 \alpha (a_2,\gamma) =  \int_\infty^{\frac{5}{3}a_2}\frac{\dd X}{- 2\sqrt{X^3 + \gamma_4 X + \gamma_6}},
 \qquad \deg \alpha = 1.
\end{equation*}
We write down a general solution of the homogeneous equation as 
$$\varphi^{(h)}_{\epsilon} = \log s_{\epsilon} \big(\alpha(a_2,\gamma) + \epsilon U_1,\gamma \big).$$

In what follows we need elliptic functions $\sigma$, $\zeta$, $\wp$, $\wp'$  associated with the curve
\eqref{CurveHG1}. Here
$$\wp(\alpha) = \tfrac{5}{3} a_2\qquad
 \wp'(\alpha) = 2 d(a_2,\gamma) = 
 - 2 \sqrt{\big(\tfrac{5}{3}a_2\big)^3+\tfrac{5}{3}a_2 \gamma_4 +\gamma_6}. $$
The functions $\wp$, $\wp'$ satisfy the equation
$(\wp')^2 = 4\wp^3 + 4\lambda_4 \wp + 4\lambda_6$, thus, they are standard Weierstrass 
functions with the invariants $(g_2,g_3)\,{=}\,({-}4\lambda_4,\,{-}4\lambda_6)$, see \cite{BatErd}.

Next, we construct a particular solution of non-homogeneous equation \eqref{Q4eqG2} in the form
$$\varphi_{\epsilon}^{\text{(nh)}} = C_2(a_2,\gamma)U_1^2 + C_1(a_2,\gamma) U_1 + C_0(a_2,\gamma).$$
By substituting the ansatz and collecting coefficients at the powers of $U_1$ we obtain a system of equations
for $C_2$, $C_1$ and $C_0$:
\begin{gather*}
 \partial_{a_2 }C_2 = -\tfrac{1}{6},\quad 
 \tfrac{6}{5} d(a_2,\gamma) \partial_{a_2} C_1= 2 \epsilon \big(C_2 + a_2\big),\quad
 \tfrac{6}{5} d(a_2,\gamma) \partial_{a_2 } C_0 = \epsilon C_1.
\end{gather*}
Observe that $\partial_{\alpha}\,{=}\,\tfrac{6}{5}d(a_2,\gamma)\partial_{a_2}$.
Whence
\begin{align*}
 &C_2(a_2,\gamma) = {-}\tfrac{1}{6} a_2, &\\
 &\partial_{\alpha} C_1 = \epsilon \tfrac{5}{3}a_2 = \epsilon \wp(\alpha)& 
 &\Rightarrow&
 &C_1(a_2,\gamma) = - \epsilon \zeta(\alpha),&\\
 &\partial_{\alpha} C_0 = - \zeta(\alpha)& &\Rightarrow&
 &C_0(a_2,\gamma) = -\log \sigma(\alpha),&
\end{align*}
above we have used the standard relations:
\begin{gather*}
\zeta(u) = -\int_{\infty}^{u} \wp(v) \dd v,\qquad
\log \sigma(u) = \int_{\infty}^{u} \zeta(v) \dd v.
\end{gather*}
Summing up, the general solution $\varphi^{(h)}_{\epsilon}  \,{+}\, \varphi^{(nh)}_{\epsilon}$ 
of \eqref{Q4eqG2} has the form
\begin{gather*}
 \varphi_\epsilon (U_1,a_2,\gamma) = \log s_{\epsilon}\big(\alpha(a_2,\gamma) + \epsilon U_1,\gamma\big)
- \tfrac{1}{6}a_2 U_1^2 - \epsilon  \zeta (\alpha) U_1 - 
\log \sigma(\alpha).
\end{gather*}
Therefore, we come to the following expression for $c_{\epsilon}$:
\begin{equation}\label{cForm}
c_{\epsilon} (U_1,a_2,\gamma)
= \frac{s_{\epsilon}\big(\alpha(a_2,\gamma) + \epsilon U_1,\gamma\big)}
{\wp'(\alpha(a_2,\gamma)) \sigma(\alpha(a_2,\gamma))} 
\rme^{-\tfrac{1}{6}a_2 U_1^2 - \epsilon \zeta(\alpha(a_2,\gamma))U_1}.
\end{equation}
Taking into account the form \eqref{cForm} of dependence of 
$c_{\epsilon}$ on $a_2$, for the next step we change the variables on $\Lambda_1$ from
$(a_2,\,\gamma_4,\,\gamma_6)$ to $(\alpha,\,\gamma_4,\,\gamma_6)$:
\begin{equation*}
c_{\epsilon} (U_1,\alpha,\gamma)
= \frac{s_{\epsilon}\big(\alpha + \epsilon U_1,\gamma\big)}
{\wp'(\alpha) \sigma(\alpha)} 
\rme^{-\tfrac{1}{10} \wp(\alpha) U_1^2 - \epsilon \zeta(\alpha)U_1}.
\end{equation*}
Under the change of variables the operators $Q_2$, $Q_0$ map to new operators $\widetilde{Q}_2$, 
$\widetilde{Q}_0$, where the map is defined by the following formula (cf. Remark~\ref{R:Ldefs})
\begin{align*}
 &\big(\partial_{a_2},\,L_2,\,L_0\big) \mapsto \bigg(\frac{5}{3\wp'(\alpha)} \partial_{\alpha},\,
  L_2 - \frac{L_2 \big(\wp(\alpha)\big)}{\wp'(\alpha)} \partial_{\alpha},\,
  L_0 - \frac{L_0 \big(\wp(\alpha)\big)}{\wp'(\alpha)} \partial_{\alpha} \bigg).
\end{align*}

Applying the operator $\widetilde{Q}_2$ to $\mathcal{Z}(u_3,U_1,\tfrac{3}{5}\wp(\alpha),\gamma)$ 
with ansatz \eqref{cForm} and using the relations
\begin{align*}
 &L_2 \sigma(\alpha) = \sigma(\alpha) \big({-}\tfrac{1}{6}\gamma_4 \alpha^2 + 
 \tfrac{1}{2} \zeta(\alpha)^2 - \tfrac{1}{2}\wp(\alpha)\big),\\
 &L_2 \zeta(\alpha) = - \tfrac{1}{3}\gamma_4 \alpha 
 - \zeta(\alpha) \wp(\alpha) - \tfrac{1}{2}\wp'(\alpha),\\
 &L_2 \wp(\alpha) = \tfrac{4}{3}\gamma_4 +2\wp(\alpha)^2 + \zeta(\alpha) \wp'(\alpha),\\
 &L_2 \wp'(\alpha) =  \zeta(\alpha) 
 \big(6\wp(\alpha)^2 + 2\gamma_4\big) + 3 \wp(\alpha) \wp'(\alpha).
\end{align*}
we come to the equation
\begin{equation}\label{Q2sigmaG1}
 \Big({-}\tfrac{1}{2} \partial_{U_1 U_1} + 
 \tfrac{1}{6} \gamma_4 \big(\alpha + \epsilon U_1 \big)^2 
 + L_2 \Big) s_{\epsilon}(\alpha + \epsilon U_1,\gamma) = 0.
\end{equation}
Similarly, the operator $\widetilde{Q}_0$ leads to 
the equation
\begin{equation}\label{Q0sigmaG1}
 \Big({-}(\alpha + \epsilon U_1) \partial_{U_1} 
 + L_0 + 1\Big) s_{\epsilon}(\alpha + \epsilon U_1,\gamma) = 0.
\end{equation}

Further, consider the power series expansion for $\mathcal{Z}(u_3,0,\tfrac{3}{5}\wp(\alpha),\gamma)$
in $u_3$ near zero. We obtain
\begin{equation*}
 \mathcal{Z}(u_3,0,\tfrac{3}{5}\wp(\alpha),\gamma) =
 \frac{s_{+}(\alpha,\gamma)
 +s_{-}(\alpha,\gamma)}{\sigma(\alpha) \wp'(\alpha)}
 + \frac{s_{+}(\alpha,\gamma)
 - s_{-}(\alpha,\gamma)}{2\sigma(\alpha)} u_3 + O(u_3^2). 
\end{equation*}
Comparing the expansion with the  initial condition $\mathcal{Z}(u_3,0,0,0)\,{=}\,u_3$ 
for entire function $\mathcal{Z}$ and taking into account that at $\gamma\,{=}\,(0,0)$ 
the value of $\alpha(a_2,\gamma)$ tends to infinity as 
$a_2\,{\to}\,0$ we find 
\begin{align*}
 &s_{+}\big(\alpha,0\big) = - s_{-}\big(\alpha,0\big),\\ 
 &s_{+}\big(\alpha,0\big) = \sigma(\alpha)|_{\gamma=0} =\alpha. 
\end{align*}
Therefore, $s_{\epsilon}\big(\alpha,0\big) \,{=}\, \epsilon \alpha$. Thus, the initial condition singles out
a unique solution of equations \eqref{Q2sigmaG1} and \eqref{Q0sigmaG1} that is 
\begin{equation*}
s_{\epsilon}(\alpha + \epsilon U_1,\gamma ) = \epsilon \sigma(\alpha + \epsilon U_1).
\end{equation*}

Combining all of the above results we write down the final expression for $\mathcal{Z}$.
\end{proof}

\begin{remark}\label{R:BakerF}
 Note that the genus 2 degenerate sigma-function \eqref{SolG2} can be represented with the help 
 of elliptic Baker function $\Phi$ 
\begin{gather}\label{BakerF}
 \Phi(u,\alpha) = \frac{\sigma(\alpha-u)}{\sigma(\alpha)\sigma(u)} \,
 \rme^{\zeta(\alpha) u}.
\end{gather}
Indeed, we have
\begin{multline*}
\bm{\sigma}(u_3,u_1,\lambda)|_{\Lambda_1} = \\ -
\rme^{-\tfrac{3}{5} \wp(\alpha)  \Big( 
\big(\tfrac{1}{2}\gamma_4 + \tfrac{3}{25} \wp(\alpha)^2\big)u_3^2 + \tfrac{2}{5} \wp(\alpha) u_1 u_3 
+ \tfrac{1}{6} u_1^2 \Big)} \frac{\sigma \big(u_1 - \tfrac{3}{5}\wp(\alpha) u_3\big)}{ \wp'(\alpha)}  
\times \\ \times 
\bigg(\Phi \big({-}u_1 + \tfrac{3}{5}\wp(\alpha) u_3,\alpha\big)
\rme^{\tfrac{1}{2} \wp'(\alpha) u_3 } + \Phi \big(u_1 - \tfrac{3}{5}\wp(\alpha) u_3,\alpha\big)  
\rme^{-\tfrac{1}{2} \wp'(\alpha) u_3 } \bigg).
\end{multline*}
\end{remark}

\begin{remark}\label{R:SigmaF2}
 Visibly right hand side of \eqref{SolG2} is singular when $\sigma(\alpha)\,{=}\,0$
 or $\wp'(\alpha)\,{=}\,0$. The first case corresponds to $a_2\,{=}\,\infty$ which does not belong to
 $\Lambda_1$, otherwise the equation \eqref{CurveHG2} would not include the term $x^5$. 
 In the second case $2 \alpha$ is a period, say $\omega_1$,
 of Weierstrass functions, that is $\frac{5}{3}a_2$ becomes
 a branch point $e_i$ of \eqref{CurveHG1}. 
 Then 
\begin{gather}\label{SolG2Lemn}
\begin{split}
\bm{\sigma}(u_3,&u_1,\lambda)|_{\Lambda_1} = u_3\,
\rme^{-\tfrac{3}{5} \wp(\alpha)  \Big( 
\big(\tfrac{1}{2}\gamma_4 + \tfrac{3}{25} \wp(\alpha)^2\big)u_3^2 + \tfrac{2}{5} \wp(\alpha) u_1 u_3 
+ \tfrac{1}{6} u_1^2 \Big)} \times \\ &\qquad \qquad\times
 \frac{\sigma \big(\alpha + u_1 - \tfrac{3}{5}\wp(\alpha) u_3 \big)}{\sigma (\alpha)}
\rme^{- \zeta(\alpha) \big(u_1 - \tfrac{3}{5}\wp(\alpha) u_3\big)}\\
&= u_3\, \rme^{-\tfrac{3}{5} e_i \Big( 
\big(\tfrac{1}{2}\gamma_4 + \tfrac{3}{25} e_i^2\big)u_3^2 + \tfrac{2}{5} e_i u_1 u_3 
+ \tfrac{1}{6} u_1^2 \Big)} \sigma_i \big(u_1 - \tfrac{3}{5}e_1 u_3 \big),
\end{split}
\end{gather}
where $\sigma_i(u)=\exp(-u\eta_i)\sigma(u+\omega_i)/\sigma(\omega_i)$ with $e_i=\wp(\omega_i)$, 
$\eta_i=\zeta(\omega_i)$ denotes a sigma-function with characteristic  \cite{BatErd}, p.348 eq.(22).
\end{remark}

\begin{theorem}\label{Th:2}
Suppose $\lambda\in \Lambda_0$. 
Sigma-function associated with a curve \eqref{CurveHG2} has the form
\begin{multline}\label{SolG2d}
\bm{\sigma}(u_3,u_1,\lambda)|_{\Lambda_0} =
\frac{\rme^{\tfrac{1}{2} \big( 
3 a_2 b_2 (a_2 + b_2) u_3^2 + 2 a_2 b_2 u_1 u_3 - (a_2 + b_2) u_1^2\big)}}
{4 (a_2 - b_2)} \times \\ \times
\bigg(\cosh\big(\sqrt{2a_2+3b_2}(u_1-a_2 u_3)\big)
\frac{\sinh\big(\sqrt{3a_2+2b_2}(u_1-b_2 u_3)\big)}{\sqrt{3a_2+2b_2}}
\\ -  \cosh\big(\sqrt{3a_2+2b_2}(u_1-b_2 u_3)\big) 
\frac{\sinh\big(\sqrt{2a_2+3b_2}(u_1-a_2 u_3)\big)}{\sqrt{2a_2+3b_2}}\bigg).
\end{multline}
\end{theorem}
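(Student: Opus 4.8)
The plan is to rerun the argument of Theorem~\ref{Th:1} on the stratum $\Lambda_0$, using its frame $(\widetilde{\ell}_0,\widetilde{\ell}_2)$ together with the decompositions \eqref{L46C25DDeg}. Restrict the operators $q_0,q_2,q_4,q_6$ to $\Lambda_0$ by substituting the parameterization \eqref{CoefC25ddegSubs} (so that $a_2,b_2$ are the two double points of \eqref{CurveHG2DDeg}), and form
\begin{align*}
 \mathcal{Q}_4 &= -\bigl(q_4 + (a_2^2 + a_2 b_2 + b_2^2) q_0 + (a_2 + b_2) q_2\bigr)|_{\Lambda_0}, \\
 \mathcal{Q}_6 &= -2\bigl(q_6 - a_2 b_2 (a_2 + b_2) q_0 - a_2 b_2 q_2\bigr)|_{\Lambda_0}.
\end{align*}
By \eqref{L46C25DDeg} the vector-field parts of both combinations cancel, so $\mathcal{Q}_4$ and $\mathcal{Q}_6$ involve no derivatives in $a_2,b_2$ and are differential operators in $u_1,u_3$ only, whereas $\mathcal{Q}_0 = q_0|_{\Lambda_0}$ and $\mathcal{Q}_2 = q_2|_{\Lambda_0}$ still carry $\widetilde{\ell}_0$ and $\widetilde{\ell}_2$. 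This is the $\Lambda_0$-analogue of the fact that on $\Lambda_1$ only $Q_6$ collapsed to an ordinary differential operator; on $\Lambda_0$ two of the four operators do.

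Next pass to the coordinates $v = u_1 - a_2 u_3$ and $w = u_1 - b_2 u_3$ suggested by the right-hand side of \eqref{SolG2d}. A direct computation shows that in these coordinates the second-order parts of $\mathcal{Q}_4$ and $\mathcal{Q}_6$ contain no mixed derivative $\partial_{vw}$: they are proportional to $\partial_{ww} - \partial_{vv}$ and to $a_2\partial_{vv} - b_2\partial_{ww}$ respectively. After the gauge transformation
$$ \mathcal{Z} = \exp\Bigl\{\tfrac{1}{2}\bigl(3 a_2 b_2 (a_2 + b_2) u_3^2 + 2 a_2 b_2 u_1 u_3 - (a_2 + b_2) u_1^2\bigr)\Bigr\}\,\rho, $$
i.e.\ after pulling out the exponential prefactor of \eqref{SolG2d}, the first-order terms of $\mathcal{Q}_4,\mathcal{Q}_6$ should drop out, and solving the resulting $2\times2$ linear system for $\partial_{vv}\rho$ and $\partial_{ww}\rho$ I expect
$$ \partial_{vv}\rho = (2 a_2 + 3 b_2)\,\rho, \qquad \partial_{ww}\rho = (3 a_2 + 2 b_2)\,\rho. $$
The general solution is $\rho = \sum_{\epsilon_1,\epsilon_2 = \pm} c_{\epsilon_1\epsilon_2}(a_2,b_2)\exp\bigl(\epsilon_1\sqrt{2 a_2 + 3 b_2}\,v + \epsilon_2\sqrt{3 a_2 + 2 b_2}\,w\bigr)$.

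It remains to impose $\mathcal{Q}_0\mathcal{Z} = \mathcal{Q}_2\mathcal{Z} = 0$ together with the initial condition $\mathcal{Z}(u_3,0,0,0) = u_3$. Substituting the four-term ansatz, the operators $\mathcal{Q}_0,\mathcal{Q}_2$ turn into a first-order linear system for the coefficients $c_{\epsilon_1\epsilon_2}(a_2,b_2)$, which I would integrate by the method of characteristics in the same spirit as equation \eqref{Q4eqG2} was treated in Theorem~\ref{Th:1}, leaving a finite set of integration constants. These are pinned down by the demand that $\mathcal{Z}$ be entire with $\mathcal{Z}(u_3,0,0,0) = u_3$: since the factor $1/(a_2-b_2)$ in \eqref{SolG2d} is singular along $\{a_2 = b_2\}$, and in particular at $\lambda = 0$ where $a_2 = b_2 = 0$, regularity forces $c_{++} + c_{--} = c_{+-} + c_{-+} = 0$, and matching the coefficient of $u_3$ in the power series as $a_2,b_2 \to 0$ yields $c_{\epsilon_1\epsilon_2} = \tfrac{1}{4}\bigl(\epsilon_2/\sqrt{3 a_2 + 2 b_2} - \epsilon_1/\sqrt{2 a_2 + 3 b_2}\bigr)$; reassembling the four exponentials into $\cosh$ and $\sinh$ produces \eqref{SolG2d}. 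As an independent check, \eqref{SolG2d} should equal the limit of \eqref{SolG2} as $\gamma$ approaches the locus $4\gamma_4^3 + 27\gamma_6^2 = 0$, on which $\sigma,\zeta,\wp,\wp'$ degenerate to elementary hyperbolic functions; the two forms then differ only by a short hyperbolic identity.

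The step I expect to be the main obstacle is this last one. Unlike Theorem~\ref{Th:1}, there are now two radical directions, $\sqrt{2 a_2 + 3 b_2}$ and $\sqrt{3 a_2 + 2 b_2}$, so the characteristic system for the $c_{\epsilon_1\epsilon_2}$ is genuinely two-dimensional rather than a single characteristic as in \eqref{Q4eqG2}; and the initial condition has to be read off from a limit in which \eqref{SolG2d} is of the indeterminate form $0/0$, so one must argue carefully that exactly the antisymmetric combination of exponentials survives. Checking that the $\partial_{vw}$ terms indeed cancel and that the quadratic form in the exponent above is precisely the one that removes the first-order terms and leaves constant potentials is routine, but it is what makes the closed form \eqref{SolG2d} natural.
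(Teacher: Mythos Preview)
Your proposal is correct and follows exactly the approach the paper intends: the paper's proof is the single line ``The theorem is proven by an argument similar to the proof of Theorem~\ref{Th:1}'', and your outline is precisely that argument carried over to $\Lambda_0$, using the frame decomposition \eqref{L46C25DDeg} in place of \eqref{L6C25deg}. One small slip: your explicit coefficients $c_{\epsilon_1\epsilon_2}$ omit the overall factor $1/(a_2-b_2)$ that sits in front of the bracket in \eqref{SolG2d}; once that is restored, the reassembly into $\cosh$/$\sinh$ is immediate.
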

The theorem is proven by an argument similar to the proof of Theorem~\ref{Th:1}.

\section{Applications}\label{s:Appl}
\subsection{A generalized Jacobi inversion problem}\label{ss:GJIP}
Let $(X_1,Y_1)$ and $(X_2,Y_2)$ be a pair of points on the elliptic curve \eqref{CurveHG1}.
Consider an inversion problem for integrals
\begin{equation}\label{IJPrG2r}
\begin{aligned}
 &\int_{\infty}^{(X_1,Y_1)} \frac{\dd X}{-2Y} 
 + \int_{\infty}^{(X_2,Y_2)} \frac{\dd X}{-2Y} = U_1,\\
 &\int_{\infty}^{(X_1,Y_1)} \frac{\dd X}{-2Y(X-A)} + 
 \int_{\infty}^{(X_2,Y_2)} \frac{\dd X}{-2Y(X-A)} = U_3.
\end{aligned}
\end{equation}
Denote 
$$\mathcal{Z} = \bm{\sigma}\big(U_3,U_1+\tfrac{3}{5}A U_3,\lambda(A,\gamma)\big),$$
and $\lambda(A,\gamma)$ is defined by
\begin{equation}\label{ParamC25dX}
\begin{split}
&\lambda_4 = \gamma_4 - \tfrac{3}{5} A^2,\\
&\lambda_6 = \gamma_6 - \tfrac{4}{5} A \gamma_4 - \tfrac{2}{25} A^3, \\
&\lambda_8 = -\tfrac{6}{5} A \gamma_6  - \tfrac{3}{25} A^2 \gamma_4 + \tfrac{12}{125} A^4,\\
&\lambda_{10} = \tfrac{9}{25} A^2 \gamma_6 + \tfrac{18}{125} A^3 \gamma_4 + \tfrac{72}{3125} A^5.
\end{split}
\end{equation}
Further, let 
$$\mathcal{P}_{ij} = -\partial_{U_i U_j} \log \mathcal{Z}\quad \text{and}\quad
\mathcal{P}_{ijk} = -\partial_{U_i U_j U_k} \log \mathcal{Z}.$$

\begin{corollary}\label{C:JacInvPr}
The solution of a generalized Jacobi inversion problem \eqref{IJPrG2r} 
is given by the formulas
\begin{equation}\label{IJPrwpG2d}
\begin{gathered}
X_1 + X_2  = \mathcal{P}_{11} + \tfrac{4}{5}A,\\ 
X_1 X_2 = - \mathcal{P}_{13} + A \mathcal{P}_{11} + \tfrac{4}{25}A^2,\\
Y_k = -\frac{1}{2} \mathcal{P}_{111} 
- \frac{\mathcal{P}_{113}}{2(X_k-A)},\quad k=1,\,2.
\end{gathered}
\end{equation}
\end{corollary}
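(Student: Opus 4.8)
The plan is to recognize that equations \eqref{IJPrG2r} are precisely the inversion problem associated with the first-kind and second-kind differentials on the curve \eqref{CurveHG1} when written in the coordinates adapted to the degenerate genus $2$ setting, so that the solution must be expressible through the degenerate sigma-function of Theorem~\ref{Th:1}. First I would observe that under the parameterization \eqref{ParamC25dX}, with $A = \tfrac{3}{5}a_2$, the curve \eqref{CurveHG2} restricted to $\Lambda_1$ is built from the elliptic curve \eqref{CurveHG1} with a double point at $x = A$; the two holomorphic-type integrands $\dd X/(-2Y)$ and $\dd X/(-2Y(X-A))$ are exactly the pullbacks of the genus $2$ first-kind differentials $\dd x/(-2y)$ and $x\,\dd x/(-2y)$ to the normalization, after the shift $u_1 \mapsto U_1 + \tfrac{3}{5}A U_3$. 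Hence the pair $(U_3,U_1)$ in \eqref{IJPrG2r} is the image, under the (degenerate) Abel map, of the divisor $(X_1,Y_1)+(X_2,Y_2)$.

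Next I would invoke the standard Jacobi-inversion machinery for genus $2$ sigma-functions: for a nondegenerate genus $2$ curve the symmetric functions of the two points $P_1,P_2$ forming the divisor with Abel image $u$ are given by $X_1+X_2 = \wp_{11}(u) = -\partial_{u_1}^2\log\bm\sigma$, $X_1 X_2 = -\wp_{13}(u)$, and $Y_k = -\tfrac12(\wp_{111}(u) + \wp_{113}(u)/(X_k - \ldots))$, these being the classical Klein formulas (Baker). The claim is that these relations persist on the stratum $\Lambda_1$ with $\bm\sigma$ replaced by its degenerate form \eqref{SolG2}, up to the explicit shifts $\tfrac45 A$, $\tfrac{4}{25}A^2$ coming from the change of variable $x \mapsto x - \tfrac35 A$ that relates the curve's $x$-coordinate to the elliptic $X$-coordinate via \eqref{CoefC25degSubs}, i.e. $\lambda_4 = \gamma_4 - \tfrac35 a_2^2$ type shifts propagate into the symmetric-function formulas. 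So the strategy is: (i) write the Klein/Baker inversion formulas for generic $\lambda \in \Lambda_2$; (ii) take the limit $\lambda \to \Lambda_1$, using that $\bm\sigma$ and all the $\wp_{[k]}$ extend continuously (Theorem~\ref{Th:1} gives the explicit limit and in particular shows $\bm\sigma$ does not vanish identically); (iii) translate the $x$-coordinate shift into the $+\tfrac45 A$, $+\tfrac{4}{25}A^2$ corrections and the argument shift $u_1 = U_1 + \tfrac35 A U_3$ into the definition of $\mathcal{Z}$; (iv) read off $Y_k$ from $\mathcal{P}_{111}$ and $\mathcal{P}_{113}$ exactly as in the nondegenerate case.

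Alternatively, and perhaps more cleanly, one can verify \eqref{IJPrwpG2d} directly from the explicit formula \eqref{SolG2}. Differentiating $\log\mathcal{Z}$ twice in $U_1$ and in $U_1,U_3$ produces combinations of $\wp$, $\zeta$, $\sigma$ evaluated at $\alpha$ and at $\alpha \pm (U_1 - \tfrac35\wp(\alpha)U_3)$; using the elliptic addition theorems for $\wp$ and the known relation between $\zeta$-differences and $\wp$, one checks that $\mathcal{P}_{11}$, $\mathcal{P}_{13}$, $\mathcal{P}_{111}$, $\mathcal{P}_{113}$ reduce to symmetric functions of the two elliptic points determined by inverting \eqref{IJPrG2r}. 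The main obstacle is bookkeeping: keeping track of the three intertwined coordinate shifts (the $x \mapsto x - \tfrac35 A$ curve shift, the $u_1 \mapsto U_1 + \tfrac35 A U_3$ Abel-map shift, and the $A = \tfrac35 a_2$, $\wp(\alpha) = \tfrac53 a_2$ identifications) simultaneously and confirming that they conspire to give exactly the stated coefficients $\tfrac45 A$ and $\tfrac{4}{25}A^2$; the elliptic-function identities themselves are routine, but ensuring the limit from $\Lambda_2$ is legitimate (continuity of $\wp_{[k]}$ on the closure, non-degeneracy of the denominators in \eqref{SolG2} as guaranteed by Remark~\ref{R:SigmaF2}) requires a small but genuine argument.
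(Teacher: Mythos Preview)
Your first approach is essentially the paper's proof: start from the classical genus~$2$ inversion formulas $x_1+x_2=\wp_{11}$, $x_1x_2=-\wp_{13}$, $y_k=-\tfrac12(x_k\wp_{111}+\wp_{113})$, specialize $\lambda$ to \eqref{ParamC25dX}, and apply the change of variables $x=X-\tfrac{2}{5}A$, $y=Y(X-A)$, $u_3=U_3$, $u_1=U_1+\tfrac{3}{5}AU_3$ to land on \eqref{IJPrG2r} and \eqref{IJPrwpG2d}. Two small corrections: the curve shift is $x=X-\tfrac{2}{5}A$ (not $-\tfrac{3}{5}A$), and the identification is $A=\wp(\alpha)=\tfrac{5}{3}a_2$ (not $A=\tfrac{3}{5}a_2$); also, no limiting argument is needed, since $\bm\sigma$ is entire in $(u,\lambda)$ and the Klein formulas hold identically wherever $\bm\sigma\neq 0$, so one simply evaluates at $\lambda\in\Lambda_1$.
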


\begin{proof}
Consider the Jacobi inversion problem on a genus 2 curve of the form \eqref{CurveHG2}
\begin{equation}\label{IJPrG2}
\begin{aligned}
 &\int_{P_0}^{(x_1,y_1)} \frac{\dd x}{-2y} + \int_{P_0}^{(x_2,y_2)}\frac{\dd x}{-2y} = u_3,\\
 &\int_{P_0}^{(x_1,y_1)} \frac{x\,\dd x}{-2y} + \int_{P_0}^{(x_2,y_2)}\frac{x\,\dd x}{-2y} = u_1.
\end{aligned}
\end{equation}
The pair of points $(x_1,y_1)$ and $(x_2,y_2)$ on the curve is defined by formulas 
\begin{equation}\label{IJPrwpG2}
\begin{gathered}
x_1 + x_2 = \wp_{11},\qquad x_1 x_2 = -\wp_{13},\\
y_k = -\tfrac{1}{2} \big(x_k \wp_{111} + \wp_{113}),\quad k=1,\,2,
\end{gathered}
\end{equation}
where $\wp_{ij} = -\partial_{u_i u_j} \log \bm{\sigma}(u_3,u_1,\lambda)$ and
$\wp_{ijk} = -\partial_{u_i u_j u_k} \log \bm{\sigma}(u_3,u_1,\lambda)$.
For more details see \cite{Baker}.

Indeed, relations \eqref{IJPrwpG2} hold for all values of $u$ and $\lambda$ where
sigma-function  does not vanish. 
Consider \eqref{IJPrG2} with parameters $\lambda$ as in \eqref{ParamC25dX}. The
substitution
\begin{equation}\label{TransG12}
\begin{gathered}
 x = X-\tfrac{2}{5}A,\qquad y = Y(X-A),\\
 u_3=U_3,\qquad u_1 = U_1 + \tfrac{3}{5} A U_3
\end{gathered}
\end{equation}
transforms the problem \eqref{IJPrG2} to the problem \eqref{IJPrG2r}.
Consequently, \eqref{IJPrwpG2} transforms to \eqref{IJPrwpG2d}.
\end{proof}

Introducing the following notation
\begin{subequations}\label{SPnot}
\begin{gather}
\mathcal{P} = \frac{\sigma \big(\alpha + U_1 \big)}
{\sigma \big(\alpha - U_1\big)}\, \rme^{ \wp'(\alpha) U_3  - 2\zeta (\alpha)U_1},\label{Snot}\\
\mathcal{S} = \frac{1}{2\big(\wp(U_1)-\wp(\alpha)\big)}
\bigg(\wp'(U_1) - \wp'(\alpha) \frac{\mathcal{P}+1}{\mathcal{P}-1}\bigg), \label{Pnot}
\end{gather}
\end{subequations}
where $\wp(\alpha)\,{=}\,A$, we present explicit expressions for \eqref{IJPrwpG2d}:
\begin{subequations}\label{IJPrwpG2dG}
\begin{gather}
X_1 + X_2  = \mathcal{S}^2 - \wp(U_1),\\ 
X_1 X_2 = \wp(U_1)\mathcal{S}^2 - \wp'(U_1) \mathcal{S} 
- \wp(\alpha)\big(\wp(U_1) + \wp(\alpha)\big)
 + \frac{\wp'(U_1)^2 - \wp'(\alpha)^2}{4\big(\wp(U_1)-\wp(\alpha)\big)},\\
 \intertext{and from $Y_k = {-}\frac{1}{2} 
\Big(X_k\partial_{U_1} (X_1 + X_2) - \partial_{U_1} (X_1 X_2)\Big)/\big(X_k-\wp(\alpha)\big)$}
\begin{split}\label{IJPrwpG2dGY}
Y_k =& -\frac{X_k - \wp(U_1)}{X_k - \wp(\alpha)}\, \mathcal{S}^3 
+ \frac{\wp'(U_1)}{X_k - \wp(\alpha)}\,\mathcal{S}^2  \\
&+ \bigg(2\wp(U_1) + \wp(\alpha) + 
 \frac{\wp'(U_1)^2-\wp'(\alpha)^2}{4\big(X_k - \wp(\alpha)\big)\big(\wp(U_1)-\wp(\alpha)\big)}\bigg) 
 \mathcal{S} + \frac{1}{2}\wp'(U_1). 
\end{split}
\end{gather}
\end{subequations}

\begin{example}
 In the case when $A$ is a branch point, say $e_1\,{=}\,\wp(\omega/2)$, of the curve \eqref{CurveHG1}
 the function $\bm{\sigma}$ is simplified dramatically, cf. \eqref{SolG2Lemn}.
 However formula \eqref{IJPrwpG2dGY} fails for one of the roots. The explicit solution has the form
\begin{gather}\label{IJPrwpG2dE}
(X_1,Y_1)= (e_1,0),\quad
(X_2,Y_2) = \big(\wp(U_1 + \omega/2), -\tfrac{1}{2}\wp'(U_1 + \omega/2)\big).
\end{gather}
\end{example}

Introducing variables $\xi_k$ by the equalities $\wp(\xi_k)\,{=}\,X_k$, $k\,{=}\,1,\,2$ 
we rewrite the problem \eqref{IJPrG2r} in the form
\begin{equation}\label{BAPr0}
\begin{gathered}
 \xi_1 + \xi_2 = U_1,\\
 \int_0^{\xi_1} \frac{\dd\xi}{\wp(\xi)-\wp(\alpha)} +
 \int_0^{\xi_2} \frac{\dd\xi}{\wp(\xi)-\wp(\alpha)} = U_3.
\end{gathered}
\end{equation}
With the help of
\begin{equation*}
 \frac{\wp'(\alpha)}{\wp(\alpha)-\wp(\xi)} = 2\zeta(\alpha) - \zeta(\alpha-\xi) - \zeta(\alpha+\xi)
\end{equation*}
we explicitly integrate and reduce \eqref{BAPr0} to the following system
\begin{equation}\label{BAPr}
 \xi_1 + \xi_2 = U_1,\qquad
 \frac{\sigma (\alpha-\xi_1)\sigma (\alpha-\xi_2)}
 {\sigma (\alpha+\xi_1)\sigma (\alpha+\xi_2)} = \rme^{-2\zeta(\alpha)U_1 + \wp'(\alpha) U_3} .
\end{equation}

\begin{example}
For the rational limit $(\gamma_4,\gamma_6)\,{=}\,0$ 
\begin{equation*}
\sigma(\xi) = \xi,\quad \zeta(\xi) = \xi^{-1},\quad \wp(\xi)=\xi^{-2},\quad
\wp'(\xi) = -2 \xi^{-3}
\end{equation*}
the problem \eqref{IJPrG2r} with usage of \eqref{BAPr} is solved explicitly: 
\begin{equation*}
 \xi_1 + \xi_2 = U_1,\qquad 
 \xi_1 \xi_2  = -\alpha^2 + \alpha U_1 
 \bigg[\tanh\bigg(\frac{U_3}{\alpha^3}+\frac{U_1}{\alpha}\bigg)\bigg]^{-1}.
\end{equation*}
In the rational limit the same relations are obtained from \eqref{IJPrwpG2dG} with
$X_k\,{=}\,\xi_k^{-2}$.
\end{example}

Consider the equation with respect to $\xi$
\begin{equation}\label{EBA}
 e^\varkappa = \frac{\sigma(\xi-\alpha)\sigma(\xi+\beta)}
 {\sigma(\xi+\alpha)\sigma(\xi-2\alpha +\beta)}.
\end{equation}
Similar equations appear in the theory of Bethe ansatz, see \cite{Gaudin} and many other publications. 
By combining the substitutions
\begin{equation*}
 \beta = \alpha-U_1 ,\qquad \varkappa = -2\zeta(\alpha)U_1 + \wp'(\alpha) U_3.
\end{equation*}
the equation \eqref{EBA} is reduced to \eqref{BAPr} 
and has two solutions $\xi_1$, $\xi_2$ defined by \eqref{IJPrwpG2dG}.
\begin{remark}
 The ratio $\displaystyle\frac{\sigma(\xi-\alpha)\sigma(\xi+\beta)}
 {\sigma(\xi+\alpha)\sigma(\xi-2\alpha +\beta)}$ can be represented as a rational function in
 $\wp(\xi)$ and $\wp'(\xi)$. Then the equation \eqref{EBA} is transformed to
 \begin{equation*}
  e^\varkappa =  \frac{\sigma(\beta)}{\sigma(2\alpha-\beta)} \cdot 
  \frac{\wp(\alpha)-\wp(2\alpha-\beta)}{\wp(\alpha)-\wp(\beta)} \cdot
  \frac{\begin{vmatrix}
         \wp'(\xi) & \wp(\xi) & 1 \\
         \wp'(\alpha) & \wp(\alpha) & 1 \\
         -\wp'(\beta) & \wp(\beta) & 1
        \end{vmatrix}}{\begin{vmatrix}
         \wp'(\xi) & \wp(\xi) & 1 \\
         -\wp'(\alpha) & \wp(\alpha) & 1 \\
         \wp'(2\alpha-\beta) & \wp(2\alpha-\beta) & 1
        \end{vmatrix}} .
 \end{equation*}
This is equivalent to an equation of the form
$A\wp'(\xi)+B\wp(\xi)+C=0$, which apparently has three roots. Two of the roots are functions in $\varkappa$
and provide a solution of \eqref{EBA} and in fact are the same as defined by \eqref{IJPrwpG2dG}. 
The extra root $\big(\wp(\xi),\wp'(\xi)\big)\,{=}\,
\big(\wp(\alpha\,{-}\,\beta),{-}\wp'(\alpha-\beta)\big)$ is independent of $\varkappa$.
\end{remark}

\subsection{Schr\"{o}dinger equation with periodic potential}\label{ss:SchEq}
Introduce the function
\begin{multline}\label{BackerFC25}
 \Phi\big((u_3,u_1),(\beta_3,\beta_1)\big)) = \frac{\bm{\sigma}(\beta_3 - u_3,\beta_1 - u_1,\lambda)}
 {\bm{\sigma}(u_3,u_1,\lambda)} \times \\ \times
 \exp\bigg({-}u_3 \int_{\infty}^{(b,y(b))} \frac{(3x^3 + \lambda_4 x)\,\dd x}{-2y} - 
 u_1 \int_{\infty}^{(b,y(b))} \frac{x^2\,\dd x}{-2y}\bigg),
\end{multline}
where $(\beta_1,\beta_3)$ is the image of the point $(b,y(b))$ on the genus 2 curve \eqref{CurveHG2}
under the Abelian map
\begin{equation*}
 \beta_3 = \int_{\infty}^{(b,y(b))} \frac{\dd x}{-2y},\qquad 
 \beta_1 = \int_{\infty}^{(b,y(b))} \frac{x\,\dd x}{-2y}.
\end{equation*}
The 1-forms $\frac{3x^3 + \lambda_4 x}{-2y}\dd x$ and $\frac{x^2}{-2y}\dd x$ 
are second kind differentials associated to the first kind differentials $\frac{1}{-2y}\dd x$ and 
$\frac{x}{-2y}\dd x$.
The function $\Phi\big((u_3,u_1),(\beta_3,\beta_1)\big))$ is a genus 2 analog  
of the elliptic Baker function \eqref{BakerF}. 

Next we exploit the fact that $\Phi\big((u_3,u_1),(\beta_3,\beta_1)\big))$ 
satisfies the equation 
\begin{equation}\label{SchrEq}
\big(\partial_{u_1 u_1} - 2 \wp_{11}\big) \Phi = b\, \Phi,
\end{equation}
which is similar to a Schr\"{o}\-din\-ger type equation
\begin{equation}\label{SchrEqBl}
 \big(\partial_{zz} - \mathcal{U}(z)\big) \psi(z) = \mathcal{E} \psi(z)
\end{equation}
where $\mathcal{E}$ is a value of energy.

\begin{corollary}\label{C:BakerF}
 Suppose $\lambda(\wp(\alpha),\gamma)\big)\,{\in}\,\Lambda_1$ is defined by \eqref{ParamC25dX} 
 with $\wp(\alpha)\,{=}\,A$.
 Then for all $U_3\,{\in}\,\Complex$ the function
 \begin{gather}
 \psi(U_1) = \frac{\bm{\sigma}\big(B_3 - U_3,B_1 - U_1 +\tfrac{3}{5}\wp(\alpha) U_3;
 \lambda(\wp(\alpha),\gamma)\big)}
 {\bm{\sigma}\big(U_3,U_1+\tfrac{3}{5}\wp(\alpha) U_3;\lambda(\wp(\alpha),\gamma)\big)} 
 \,\rme^{U_1 \varpi}, \label{BackerFC25Deg}\\
\intertext{where $B_1$ is an arbitrary complex number,}
 B_3 = \frac{1}{\wp'(\alpha)}\bigg(2 \zeta(\alpha)  B_1
 + \log \frac{\sigma(\alpha-B_1)}{\sigma(\alpha+B_1)}\bigg), \notag \\
 \varpi = -\zeta(B_1) + \frac{1}{5} \wp(\alpha)  
  \bigg(1+\frac{18\zeta(\alpha) \wp(\alpha)}{5 \wp'(\alpha)}\bigg)B_1 
  + \frac{9 \wp(\alpha)^2}{25 \wp'(\alpha)} 
  \log \frac{\sigma(B_1-\alpha)}{\sigma(B_1+\alpha)}. \notag
\end{gather}
 satisfies the Schr\"{o}dinger equation
 \eqref{SchrEqBl} with the potential and energy 
\begin{equation}\label{PotenC25Deg}
 \mathcal{U}(U_1) = 2 \mathcal{S}^2  -2\wp(U_1)-2\wp(\alpha),\qquad \mathcal{E}=\wp(B_1),
\end{equation}
where the notation \eqref{Pnot} is used.
\end{corollary}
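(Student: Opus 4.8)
The plan is to derive Corollary~\ref{C:BakerF} as the specialization of the genus~$2$ Baker function construction of Section~\ref{ss:SchEq} to the stratum $\Lambda_1$, using the explicit degenerate sigma-function from Theorem~\ref{Th:1}. First I would take the ambient genus~$2$ Baker function $\Phi\big((u_3,u_1),(\beta_3,\beta_1)\big)$ defined by \eqref{BackerFC25}, apply the coordinate change \eqref{TransG12}, namely $x=X-\tfrac{2}{5}A$, $y=Y(X-A)$, $u_3=U_3$, $u_1=U_1+\tfrac{3}{5}AU_3$, and restrict $\lambda$ to $\lambda(\wp(\alpha),\gamma)$ as in \eqref{ParamC25dX} with $A=\wp(\alpha)$. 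Under this substitution the second-kind integrals in the exponential of \eqref{BackerFC25} must be re-expressed in the elliptic variable: $\int_\infty^{(b,y(b))}\frac{x^2\,\dd x}{-2y}$ and $\int_\infty^{(b,y(b))}\frac{(3x^3+\lambda_4 x)\,\dd x}{-2y}$ become, after pushing forward along \eqref{TransG12}, integrals on the curve \eqref{CurveHG1} that evaluate (via $\zeta$, $\log\sigma$, and the standard relation $\frac{\wp'(\alpha)}{\wp(\alpha)-\wp(\xi)}=2\zeta(\alpha)-\zeta(\alpha-\xi)-\zeta(\alpha+\xi)$ already used in \eqref{BAPr}) to the combination producing the prefactor $\rme^{U_1\varpi}$ with $\varpi$ as stated; the parameter $B_3$ is then forced as the image $\beta_3$ of the point $B_1$ under the degenerate Abelian map, giving exactly the displayed formula $B_3=\tfrac{1}{\wp'(\alpha)}\big(2\zeta(\alpha)B_1+\log\tfrac{\sigma(\alpha-B_1)}{\sigma(\alpha+B_1)}\big)$.

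Next I would invoke equation \eqref{SchrEq}: the genus~$2$ Baker function satisfies $\big(\partial_{u_1u_1}-2\wp_{11}\big)\Phi=b\,\Phi$. Under \eqref{TransG12} the operator $\partial_{u_1}$ restricted to fixed $U_3$ becomes $\partial_{U_1}$, so \eqref{SchrEq} turns into $\big(\partial_{U_1U_1}-2\mathcal{P}_{11}\big)\psi=b\,\psi$, and $b=\wp(B_1)$ since $(b,y(b))$ is the point with $X$-coordinate $\wp(B_1)$ under the inverse of \eqref{TransG12}, i.e. $b=\wp(B_1)-\tfrac{2}{5}A$... here I would be careful with the $-\tfrac{2}{5}A$ shift and absorb it, checking that the stated energy is $\mathcal{E}=\wp(B_1)$ as claimed. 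Then I would identify the potential: from Corollary~\ref{C:JacInvPr} and the explicit formulas \eqref{IJPrwpG2dG} one has $\mathcal{P}_{11}=X_1+X_2-\tfrac{4}{5}A$... no — rather I would use $\mathcal{P}_{11}=-\partial_{U_1U_1}\log\mathcal{Z}$ together with the representation $X_1+X_2=\mathcal{S}^2-\wp(U_1)$ and the relation $\mathcal{P}_{11}+\tfrac45 A=X_1+X_2$ from \eqref{IJPrwpG2d}, so that $2\mathcal{P}_{11}=2(X_1+X_2)-\tfrac85 A=2\mathcal{S}^2-2\wp(U_1)-2\wp(\alpha)$, which is precisely $\mathcal{U}(U_1)$ in \eqref{PotenC25Deg}. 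This is the step where the notation \eqref{Pnot} enters and must be matched exactly.

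Finally I would assemble: substituting the Theorem~\ref{Th:1} formula \eqref{SolG2} for $\bm{\sigma}(U_3,U_1+\tfrac35\wp(\alpha)U_3;\lambda)$ and for $\bm{\sigma}(B_3-U_3,B_1-U_1+\tfrac35\wp(\alpha)U_3;\lambda)$ into the ratio in \eqref{BackerFC25Deg}, the exponential Gaussian-in-$u_3$ factors and the $\wp'(\alpha)\sigma(\alpha)$ denominators cancel, leaving a ratio of elliptic $\sigma$'s times an exponential linear in $U_1$; checking that this reduced expression is independent of $U_3$ (which it must be, since $U_3$ enters the genus~$2$ Baker function only through the flow and the Schr\"odinger equation holds for every $U_3$) is the cleanest way to pin down the constants, and it is exactly the mechanism that forces the given $B_3$ and $\varpi$. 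The main obstacle I anticipate is bookkeeping: correctly transporting the two second-kind differentials through the birational map \eqref{TransG12} and verifying that their periods/integrals collapse to the stated closed form for $\varpi$, since a sign or a factor of $\tfrac15$, $\tfrac35$, or $\tfrac{2}{5}$ error there propagates into both $B_3$ and the potential. A secondary technical point is confirming the $U_3$-independence of $\psi$ after the reduction — this is a consistency check rather than a derivation, but it is where one sees the whole construction hang together, so I would carry it out explicitly rather than assert it.
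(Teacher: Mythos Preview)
Your approach matches the paper's: restrict \eqref{BackerFC25} and \eqref{SchrEq} to $\Lambda_1$ via \eqref{TransG12}, evaluate the Abel and second-kind integrals on the elliptic curve to produce $B_3$ and $\varpi$, and read off the potential from Corollary~\ref{C:JacInvPr}. Two slips to fix, though.

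First, your line $2\mathcal{P}_{11}=2(X_1+X_2)-\tfrac{8}{5}A=2\mathcal{S}^2-2\wp(U_1)-2\wp(\alpha)$ loses a $\tfrac{2}{5}\wp(\alpha)$: in fact $2\mathcal{P}_{11}=2\mathcal{S}^2-2\wp(U_1)-\tfrac{8}{5}\wp(\alpha)$. The missing $-\tfrac{2}{5}\wp(\alpha)$ is exactly the energy shift $b=\wp(B_1)-\tfrac{2}{5}\wp(\alpha)$ you flagged but never actually used; moving that constant from the eigenvalue side to the potential side gives $\mathcal{U}=2\mathcal{P}_{11}-\tfrac{2}{5}\wp(\alpha)$ and $\mathcal{E}=\wp(B_1)$, which then agrees with \eqref{PotenC25Deg}.

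Second, your final ``consistency check'' rests on a false premise: $\psi$ is \emph{not} $U_3$-independent. Both $\psi$ and the potential $\mathcal{U}$ genuinely depend on $U_3$ through $\mathcal{P}$ in \eqref{Snot}; the corollary is asserting a one-parameter family of Schr\"odinger equations indexed by $U_3$. The paper deals with the $U_3$-exponential in \eqref{BackerFC25} simply by observing that the factor $\exp\bigl(-U_3\int\dd R_3\bigr)$ is constant in $U_1$ and therefore inessential for \eqref{SchrEqBl}, not by claiming it cancels.
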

\begin{proof}
Consider the equation \eqref{SchrEq} with respect to 
the variable $U_1\,{=}\,u_1\,{-}\,\frac{3}{5}\wp(\alpha) u_3$, 
and use $\mathcal{P}_{11}$ from Corollary~\ref{C:JacInvPr} instead of $\wp_{11}$.
The equation acquires the form
\begin{equation*}
 \big(\partial_{U_1 U_1} - 2 \mathcal{P}_{11}\big) \Psi = b \Psi,
\end{equation*}
where $\Psi$ is obtained from $\Phi$ by applying the substitution \eqref{TransG12}
\begin{multline}\label{PsiFC25Deg}
 \Psi\big((U_3,U_1),(B_3,B_1)\big)) = 
 \frac{\bm{\sigma}\big(B_3 - U_3,B_1 - U_1+\tfrac{3}{5}\wp(\alpha) U_3;\lambda(\wp(\alpha),\gamma)\big)}
 {\bm{\sigma}\big(U_3,U_1+\tfrac{3}{5}\wp(\alpha) U_3;\lambda(\wp(\alpha),\gamma)\big)} \times \\ \times
 \exp\bigg({-}U_3 \int_{\infty}^{(b,y(b))} \dd R_3 - U_1 \int_{\infty}^{(b,y(b))}\dd R_1 \bigg),
\end{multline}
where $B_3\,{=}\,\beta_3$, $B_1\,{=}\,\beta_1\,{-}\,\tfrac{3}{5}\wp(\alpha) \beta_3$, 
the set of parameters $\lambda(\wp(\alpha),\gamma)$ is defined by \eqref{ParamC25dX} 
with $A\,{=}\,\wp(\alpha)$. Under the substitution \eqref{TransG12} we get 
\begin{equation*}
 B_3 = \int_\infty^{b+\tfrac{2}{5}\wp(\alpha)} \frac{\dd X}{-2(X-\wp(\alpha))Y(X)} = 
 \int_0^{B_1} \frac{\dd \xi}{\wp(\xi)-\wp(\alpha)}.
\end{equation*}
The factor $\exp\big({-}U_3 \int_{\infty}^{(b,y(b))} \dd R_3\big)$ 
is inessential so can be safely omitted.
Next, we compute
\begin{equation*}
  \dd R_1 = \bigg(\frac{\tfrac{1}{5}\wp(\alpha)}{-2Y} + \frac{X}{-2Y} 
  + \frac{\tfrac{9}{25}\wp(\alpha)^2}{-2Y(X-\wp(\alpha))}\bigg) \dd X
\end{equation*}
and obtain $\int_{\infty}^{(b,y(b))}\dd R_1 = \varpi$.
Finally, using \eqref{TransG12} we find $b\,{=}\,\wp(B_1)\,{-}\tfrac{2}{5}\wp(\alpha)$.
\end{proof}

\begin{remark}
The function $\mathcal{U}$ defined by \eqref{PotenC25Deg} satisfies the KdV equation
\begin{equation*}
 4\partial_{U_3}\mathcal{U} = \partial_{U_1}^3 \mathcal{U} 
 - 6 \mathcal{U} \partial_{U_1}\mathcal{U},
\end{equation*}
and is a stationary solution for higher equations of KdV hierarchy.
\end{remark}

Suppose, the roots $\{e_j\}_{j=1}^5$, $\sum_j e_j\,{=}\,0$, of polynomial 
$f(x,0)=x^5+\lambda_4 x^3+\lambda_6 x^2 +\lambda_8 x+\lambda_{10}$, 
that is branch points of the curve \eqref{CurveHG2}, are real numbers, 
and $e_1\,{\geqslant}\,e_2\,{\geqslant}\,e_3 \,{\geqslant}\, e_4 \,{\geqslant}\, e_5$.
Then the spectrum of operator in \eqref{SchrEq} is the union of three segments: 
$[e_5,\,e_4]\cup [e_3,\,e_2]\cup [e_1,\,\infty]$. When $\lambda\,{\in}\,\Lambda_1$
one of the segments, say $[e_5,\,e_4]$, contracts to produce a double point $A$.
Under the conditions we can interpret the results of Corollary~\ref{C:BakerF} in the following way.

\begin{corollary}
Let $(\omega,\,\omega')$ be periods of Weierstrass functions and assume 
$\Imw \omega = 0$, $\Rew \omega' = 0$.
Then, provided $\wp(\alpha)\in\Real$, formula \eqref{PotenC25Deg} defines 
one parametric families, with parameter $\varphi \in [-\tfrac{1}{2},\,\frac{1}{2}]$, 
of real-valued potentials in variable $x$ on real line  
\begin{align*}
  &\mathcal{V}_1(x) = \frac{1}{\omega^2} \mathcal{U}(\omega x),& &\text{with} \quad U_3 = 
  \frac{2\pi \imath}{\wp'(\alpha)}\varphi; &\\
  &\mathcal{V}_2(x) = \frac{1}{\omega^2} \mathcal{U}(\omega x+\tfrac{1}{2}\omega'),& 
  &\text{with} \quad U_3 =  \frac{2\pi \imath}{\wp'(\alpha)} \varphi 
  + \frac{1}{\wp'(\alpha)} \big(\zeta(\alpha) \omega' -  \alpha \eta'\big). &
\end{align*}
The operators $\partial_{xx} - \mathcal{V}_1(x)$ and $\partial_{xx} - \mathcal{V}_2(x)$ 
share a common spectrum 
\begin{equation*} 
 \big\{\wp(\alpha)\big\} \cup \big[\wp(\tfrac{1}{2}\omega'),\,
 \wp(\tfrac{1}{2}\omega+\tfrac{1}{2}\omega')\big] \cup 
 \big[\wp(\tfrac{1}{2}\omega),\infty\big].
\end{equation*}
\end{corollary}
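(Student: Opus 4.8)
The plan is to reduce the claim to elementary reflection properties of the genus~$1$ Weierstrass functions of \eqref{CurveHG1} along the distinguished lines of a rectangular lattice, and then to read off the spectrum from the explicit Baker--Akhiezer eigenfunction of Corollary~\ref{C:BakerF}. \emph{Reality of $\mathcal{V}_1$ and $\mathcal{V}_2$.} Under $\Imw\omega=0$, $\Rew\omega'=0$ the invariants of \eqref{CurveHG1} are real, so $\wp,\wp',\zeta,\sigma$ obey $f(\bar z)=\overline{f(z)}$; the condition $\wp(\alpha)\in\Real$ then puts $\alpha$ on the imaginary axis, the only orbit on which $\wp$ attains the half-line $(-\infty,\wp(\tfrac12\omega')]$ containing the double point $\tfrac53 a_2$. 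Hence $\wp(\alpha)$ and $\eta=\zeta(\tfrac12\omega)$ are real while $\wp'(\alpha)$, $\zeta(\alpha)$, $\eta'=\zeta(\tfrac12\omega')$ are purely imaginary, and $\wp(U_1)$, $\wp'(U_1)$ are real both on $\Real$ and on $\tfrac12\omega'+\Real$. I would compute the quasi-periodicity factors of $\mathcal{P}$ from that of $\sigma$ and verify that the prescribed values of $U_3$ --- in particular the summand $\tfrac1{\wp'(\alpha)}(\zeta(\alpha)\omega'-\alpha\eta')$ in the one attached to $\mathcal{V}_2$, which precisely cancels the multiplier $\rme^{2\eta'\alpha}$ produced by the shift $U_1\mapsto U_1+\tfrac12\omega'$ --- make $\overline{\mathcal{P}}=1/\mathcal{P}$ on the relevant line, so $|\mathcal{P}|=1$. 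Then $\tfrac{\mathcal{P}+1}{\mathcal{P}-1}$ is purely imaginary, $\wp'(\alpha)\tfrac{\mathcal{P}+1}{\mathcal{P}-1}$ is real, and \eqref{Pnot} gives $\mathcal{S}\in\Real$, whence $\mathcal{U}$ of \eqref{PotenC25Deg} and $\mathcal{V}_1$, $\mathcal{V}_2$ are real on the $x$-axis.

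\emph{Spectrum of a single operator.} Rewriting \eqref{SchrEq} through Corollary~\ref{C:JacInvPr} and \eqref{PotenC25Deg} shows that for every $B_1$ the function $\psi$ of \eqref{BackerFC25Deg} solves $(\partial_{U_1U_1}-\mathcal{U})\psi=\wp(B_1)\psi$, and as $B_1$ ranges over \eqref{CurveHG1} the value $\mathcal{E}=\wp(B_1)$ sweeps $\Complex$. Because $\mathcal{V}_j$ is bounded and real, $\partial_{xx}-\mathcal{V}_j$ is self-adjoint on $L^2(\Real)$ and $\mathcal{E}$ is in its spectrum iff the corresponding $\psi$ is bounded as $x\to\pm\infty$. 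Using Theorem~\ref{Th:1} to write $\bm{\sigma}|_{\Lambda_1}$ through $\sigma$, I would compute the Bloch multiplier $\mu(\mathcal{E})=\psi(U_1+\omega)/\psi(U_1)$; the $\sigma$-ratios together with the factor $\rme^{U_1\varpi}$ give an explicit elliptic expression in $B_1$ whose modulus has logarithm equal to the real part of a genus~$1$ Abelian integral, and boundedness is $|\mu(\mathcal{E})|=1$. The classical solution of this condition yields the two bands $[\wp(\tfrac12\omega'),\wp(\tfrac12\omega+\tfrac12\omega')]$ and $[\wp(\tfrac12\omega),\infty)$ --- the Lam\'{e} spectrum of the asymptotic background $-2\wp(U_1)-2\wp(\alpha)+\const$ --- while $B_1=\alpha$ produces the isolated point $\mathcal{E}=\wp(\alpha)$: there the numerator and denominator of $\psi$ combine into a genuine $L^2(\Real)$ state decaying at both ends, the soliton on the elliptic background.

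\emph{The spectrum is common.} By the Remark after Corollary~\ref{C:BakerF} the function $\mathcal{U}$ solves the KdV equation in $(U_1,U_3)$, whose flow is isospectral; varying $\varphi\in[-\tfrac12,\tfrac12]$ only moves $U_3$, and passing from $\mathcal{V}_1$ to $\mathcal{V}_2$ is a KdV time shift composed with the translation $U_1\mapsto U_1+\tfrac12\omega'$, a unitary conjugation on $L^2(\Real)$. So every operator in both families has the spectrum found above, namely $\{\wp(\alpha)\}\cup[\wp(\tfrac12\omega'),\wp(\tfrac12\omega+\tfrac12\omega')]\cup[\wp(\tfrac12\omega),\infty)$. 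The main obstacle is the spectral analysis: turning $|\mu(\mathcal{E})|=1$ into the precise interval description of the continuous spectrum --- the real-analytic study of where a genus~$1$ Abelian integral takes real values along the curve --- and checking that $B_1=\alpha$ gives an honest isolated eigenvalue rather than a spurious point of the formal spectrum; the reflection bookkeeping of the reality step is routine but must be handled carefully, above all the correction term in the value of $U_3$ for $\mathcal{V}_2$.
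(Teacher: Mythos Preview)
Your overall architecture is sound, and in fact you carry the argument further than the paper does: the paper's proof addresses only the reality of the potential and says nothing whatsoever about the spectral claim. It simply records that on a rectangular lattice $\wp$ is real along the boundary of the fundamental rectangle, that $\wp(U_1),\wp'(U_1)$ are real for $U_1\in\Real$ and for $U_1\in\tfrac12\omega'+\Real$, and then runs a short case analysis on the location of $\alpha$: depending on which side of the rectangle $\alpha$ lies, either $\mathcal P$ is real or $|\mathcal P|=1$, and correspondingly $\mathcal S$ is real or purely imaginary, so that $\mathcal S^2$ --- and hence $\mathcal U$ --- is real in every case. Your Bloch-multiplier computation for the continuous spectrum and your invocation of KdV isospectrality to pass between the families are reasonable additions that the paper simply omits.

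There is, however, a genuine gap in your reality step. The hypothesis $\wp(\alpha)\in\Real$ does \emph{not} force $\alpha$ onto the imaginary axis: it only places $\alpha$ on the boundary of the rectangle with vertices $0,\tfrac12\omega,\tfrac12(\omega+\omega'),\tfrac12\omega'$, and $\wp$ sweeps a different real interval on each side. Your argument implicitly assumes the double point lies in the lowest gap $(-\infty,\wp(\tfrac12\omega')]$, but the Corollary makes no such restriction; in particular the case $\alpha\in(\tfrac12\omega,\tfrac12\omega+\tfrac12\omega')$, where the double point sits in the finite gap $(\wp(\tfrac12\omega+\tfrac12\omega'),\wp(\tfrac12\omega))$, is equally relevant and is handled by the paper. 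On the two ``vertical'' sides $\wp'(\alpha)$ is purely imaginary and your $|\mathcal P|=1$ reasoning applies, whereas on the two ``horizontal'' sides $\wp'(\alpha)$ is real and the mechanism is different ($\mathcal P/\mathcal P^\ast=1$). You should run all four cases; the bookkeeping is the same in spirit but the conclusions about $\mathcal S$ (real versus purely imaginary) differ, and only $\mathcal S^2$ is uniformly real.
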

\begin{proof}
Under the assumptions 
$\wp(z)$ is real when $z$ runs from the origin along the boundary of 
rectangle with sides $\tfrac{1}{2}\omega$ and $\tfrac{1}{2}\omega'$.
Further, both $(\wp(x),\,\wp'(x))$ and $(\wp(x+\tfrac{1}{2}\omega'),\,\wp'(x+\tfrac{1}{2}\omega'))$
are real for $x\in\Real$. Let $\alpha \in (0,\tfrac{1}{2}\omega)$, 
the functions $\mathcal{P}$ and $\mathcal{S}$ defined by
\eqref{SPnot} are real-valued. At $\alpha \in (\tfrac{1}{2}\omega + \tfrac{1}{2}\omega', \tfrac{1}{2}\omega')$
value of $\wp'(\alpha)$ is real, and $\mathcal{P}/\mathcal{P}^\ast=1$, as a result $\mathcal{S}$ is real.  
At $\alpha \in (\tfrac{1}{2}\omega',0)\cup (\tfrac{1}{2}\omega, \tfrac{1}{2}\omega + \tfrac{1}{2}\omega')$ 
values of $\wp'(\alpha)$ are imaginary, and $\mathcal{P}\mathcal{P}^\ast=1$ so $\mathcal{S}$ is imaginary.
\end{proof}

\begin{remark}
 The above potentials are unbounded except for  
 $\mathcal{V}_2(x)$ with $\varphi \in (-\tfrac{1}{2},\,\frac{1}{2})$ 
 in three cases: (1) $\Rew \alpha = 0$,
 (2) $\Rew \alpha = \omega$, (3) $\Imw \alpha = 0$.
\end{remark}

\subsection{Rank 3 lattices}\label{ss:3RankL}
Consider the space $\mathcal{C}$ of curves with a puncture at the common branch point at infinity.
Choose the following basis of holomorphic differentials
\begin{equation}
 h(x,y) = \big(1,\,x,\,{-}x^2,\,{-}(3x^3 + \lambda_4 x)\big)^t \frac{\dd x}{-2y},
\end{equation}
Denote by $\mathfrak{C}=\big(\mathfrak{a}_1$, $\mathfrak{a}_2$, 
$\mathfrak{b}_2$, $\mathfrak{b}_1\big)$  
a basis of homology cycles such that $\mathfrak{a}_i \circ \mathfrak{b}_j = \delta_{ij}$, see 
Figure~\ref{F:RootsHCycles}.
Denote by $\Omega$ a matrix of integrals of $h(x,y)$ over $\mathfrak{C}$, 
that is $\Omega = \int_{\mathfrak{C}} h(x,y)$. 
\begin{figure}[h]
 \centering 
 \includegraphics[width=0.6\textwidth]{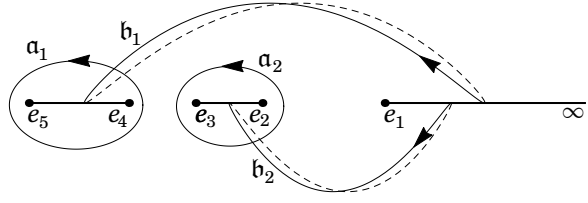}
 \caption{Sketch of branch points and basis homology cycles. \label{F:RootsHCycles}}
\end{figure}

If $\lambda\in\Lambda_2$, then $\rank \Omega = 4$ and $\Omega$ 
satisfies Legendre identity
\begin{equation}\label{LegId4}
 \Omega^t J \Omega = 2\pi \imath J
\end{equation}
for the symplectic matrix $J\,{=}\,\rm{codiag}(1,\,1,\,{-}1,\,{-}1)$.
First two rows of $\Omega$ generate a rank 4 lattice in $\Complex^2$, and thus 
define a two-dimensional complex torus as the quotient of $\Complex^2$ over the lattice. 
Meromorphic functions on the torus, that is \emph{four-periodic functions
on $\Complex^2$}, can be derived from sigma-function by taking logarithmic 
derivatives of order greater than 1. 
If $\lambda\in\Lambda_1 \cup \Lambda_0$, then $\rank \Omega < 4$. 

Introduce the notation
\begin{align*}
 &\mathcal{F}_k = \{\lambda\in\Complex^4 \mid \rank \Omega = k\},\qquad k=0,\,1,\,2,\,3,\,4.
\end{align*}
Evidently, the space $\Lambda\cong \Complex^4$ is a disjoint union $\Lambda=\cup_{k=0}^4 \mathcal{F}_k$
(cf. Proposition~\ref{P:LabmdaS}), 
where $\mathcal{F}_4=\Lambda_2$. Next,
\begin{lemma} 
$\mathcal{F}_3$ is the set of simple roots of the discriminant $\Delta(\lambda)$ of \eqref{CurveHG2}
\begin{equation*} 
 \mathcal{F}_3 = \{\lambda\mid \Delta(\lambda)=0, \partial_{\lambda} \Delta(\lambda)\neq 0\}.
\end{equation*}
\end{lemma}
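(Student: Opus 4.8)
The statement identifies $\mathcal{F}_3$ — the set of $\lambda$ for which the period matrix $\Omega$ has rank exactly $3$ — with the smooth locus of the discriminant hypersurface $\{\Delta(\lambda)=0\}$. The plan is to show the two inclusions separately, using the geometry of the curve degeneration. By Proposition~\ref{P:LabmdaS} we already know that $\Delta(\lambda)=0$, $\Gamma(\lambda)\neq 0$ characterizes $\Lambda_1$, the stratum where the curve \eqref{CurveHG2} acquires exactly one node and so has geometric genus $1$. First I would check that $\mathcal{F}_3\subseteq\Lambda_1$: since $\rank\Omega=3<4$ we have $\lambda\notin\mathcal{F}_4=\Lambda_2$, so $\Delta(\lambda)=0$; and if $\lambda\in\Lambda_0$ the curve has geometric genus $0$ and the holomorphic differentials $h(x,y)$ degenerate to (at most) rational differentials with purely logarithmic periods, which one checks forces $\rank\Omega\leq 2$, excluding $\mathcal{F}_3$. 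Hence $\mathcal{F}_3\subseteq\Lambda_1$.

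Next I would compute $\rank\Omega$ on $\Lambda_1$ directly. On this stratum the curve has one node at $(a_2,0)$ and normalizes to the elliptic curve \eqref{CurveHG1}. Under the basis $h(x,y)$ and the homology basis $\mathfrak{C}$ of Figure~\ref{F:RootsHCycles}, the cycle $\mathfrak{a}_2$ (say) shrinks to the vanishing cycle through the node; the first differential $\frac{\dd x}{-2y}$ has a nonzero residue around the node while the combination surviving as a holomorphic differential on \eqref{CurveHG1} pulls back from the normalization. Concretely, using the substitution $x=X-\tfrac{2}{5}A$, $y=Y(X-A)$ from \eqref{TransG12} (with $A=\tfrac{3}{5}\wp(\alpha)=a_2$) one sees that two of the four entries of $h$ become, up to elementary-function factors, the holomorphic and a second-kind differential on \eqref{CurveHG1}, giving two periods $(2\omega,2\eta)$-type around $\mathfrak{b}$, while around the vanishing cycle one gets a logarithmic (constant, $2\pi\rmi$-type) period and the fourth period is then forced to be a linear combination — precisely the mechanism reflected by the decomposition $\ell_6|_{\Lambda_1}=-a_2^3\widetilde\ell_0-a_2^2\widetilde\ell_2-a_2\widetilde\ell_4$ of Lemma~\ref{L:VFdegC25}. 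One concludes $\rank\Omega=3$ on a Zariski-dense open subset of $\Lambda_1$, with possible drop to $\rank\Omega\leq 2$ only on the sublocus where the surviving elliptic curve itself degenerates, i.e. where $\delta(\gamma)=4\gamma_4^3+27\gamma_6^2=0$.

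The final and most delicate step is to match this exceptional sublocus with the multiple-root locus $\{\partial_\lambda\Delta=0\}$ of the discriminant. The point is that a multiple root of $\Delta$ on $\Lambda_1$ corresponds to a curve with \emph{two} coalescing nodes or a node plus an extra tangency — equivalently, to the elliptic quotient \eqref{CurveHG1} itself becoming singular, which is exactly $\delta(\gamma)=0$; and by the parameterization \eqref{CoefC25degSubs} together with $\delta(\gamma)\mapsto\delta(\mu,a_2)$, one shows $\{\lambda\in\Lambda_1:\delta(\gamma)=0\}$ coincides with $\{\lambda:\Delta(\lambda)=0,\ \partial_\lambda\Delta(\lambda)=0\}\setminus\Lambda_0$. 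This can be done by a Gröbner-basis / resultant computation analogous to the one in the proof of Proposition~\ref{P:LabmdaS}: eliminate $(\gamma,a_2)$ from $\Upsilon(\lambda;\gamma,a_2)=0$ together with $\delta(\gamma)=0$ and compare the resulting ideal with the ideal generated by $\Delta$ and its partials. Putting the three steps together gives $\mathcal{F}_3=\Lambda_1\setminus\{\delta(\gamma)=0\}=\{\lambda:\Delta(\lambda)=0,\ \partial_\lambda\Delta(\lambda)\neq 0\}$.

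The main obstacle I anticipate is the third step: rigorously establishing that the rank of $\Omega$ drops below $3$ on $\Lambda_1$ \emph{precisely} at the non-simple roots of $\Delta$, rather than on some larger or smaller subvariety. This requires either a careful local analysis of how the period integrals behave as a second node forms (tracking both the vanishing cycle that was already present and the new one), or the explicit elimination-theory identification of $\{\Delta=0,\partial_\lambda\Delta=0\}\cap\Lambda_1$ with $\{\delta(\gamma)=0\}$ under the parameterization \eqref{CoefC25degSubs}; the degree bookkeeping there is the delicate part, since one must be sure the extra solution $(s,b_2)=(\tfrac{2}{3}t+\tfrac{5}{9}a_2,\ t-\tfrac{2}{3}a_2)$ noted at the end of Section~\ref{s:Strata} does not contribute spurious components.
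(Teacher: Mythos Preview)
Your overall strategy --- restrict to $\Lambda_1$, compute the periods via the elliptic normalization, and then match the rank-drop locus with the singular locus of $\{\Delta=0\}$ --- is the same as the paper's. But you misidentify the rank-drop locus, and this is a genuine error, not just a gap.

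You claim that on $\Lambda_1$ the rank of $\Omega$ falls below $3$ precisely where ``the surviving elliptic curve itself degenerates, i.e.\ where $\delta(\gamma)=4\gamma_4^3+27\gamma_6^2=0$.'' But $\delta(\gamma)\neq 0$ is part of the \emph{definition} of $\Lambda_1$ (see \eqref{CoefC25degSubs} and Proposition~\ref{P:LabmdaS}); passing to $\delta(\gamma)=0$ lands you in $\Lambda_0$, not in a sublocus of $\Lambda_1$. Your final chain $\mathcal{F}_3=\Lambda_1\setminus\{\delta(\gamma)=0\}$ would therefore give $\mathcal{F}_3=\Lambda_1$, which is false: the partition $(3,1,1)$ --- a triple branch point together with two simple ones --- sits inside $\Lambda_1$ (the curve still has geometric genus $1$) yet has $\rank\Omega=2$.

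The correct condition is $\wp'(\alpha)=0$, equivalently $d(a_2,\gamma)^2=\gamma_6+\tfrac{5}{3}a_2\gamma_4+\bigl(\tfrac{5}{3}a_2\bigr)^3=0$: the node $x=a_2$ collides with one of the three simple roots of the cubic factor, producing a triple point. The paper obtains this by explicitly integrating $h(x,y)$ under the uniformization $(x,y)=\bigl(\wp(\xi)-\tfrac{2}{5}\wp(\alpha),\,-\tfrac{1}{2}\wp'(\xi)(\wp(\xi)-\wp(\alpha))\bigr)$, writing the finite columns $T_1,T_2,T_3$ and $H_1,H_2,H_3$ of $\Omega$ in closed form (formulas \eqref{Tperiods}--\eqref{Hperiods}), and reading off that the matrix $K_1$ degenerates exactly when $\wp'(\alpha)=0$. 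The identification with the non-simple-root locus of $\Delta$ is then the one-line computation
\[
\partial_\lambda\Delta(\lambda)\big|_{\Lambda_1}
=\tfrac{1}{5}\,(4\gamma_4^3+27\gamma_6^2)\,\bigl(\wp'(\alpha)\bigr)^{6}\,
\Bigl(\bigl(\tfrac{3}{5}\wp(\alpha)\bigr)^3,\bigl(\tfrac{3}{5}\wp(\alpha)\bigr)^2,\tfrac{3}{5}\wp(\alpha),1\Bigr),
\]
which on $\Lambda_1$ vanishes iff $\wp'(\alpha)=0$, since the $\delta(\gamma)$ factor is nonzero there. Your proposed Gr\"obner/resultant route would have to be redirected to eliminate against $d(a_2,\gamma)^2$ rather than $\delta(\gamma)$; the paper bypasses this entirely with the explicit gradient formula above.
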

\begin{proof}
Evidently, $\mathcal{F}_3\subset \Lambda_1$. 
In the case of $\lambda\,{\in}\,\Lambda_1$ we use the transformations \eqref{TransG12}
to obtain elliptic parametrization $(x,y)\,{=}\,\big(\wp(\xi)-\tfrac{2}{5}\wp(\alpha),\,
{-}\tfrac{1}{2}\wp'(\xi)\big(\wp(\xi)-\wp(\alpha)\big) \big)$ with the uniformizing parameter $\xi\in\Complex$.
Compute the integrals $I(x,y)=\int_\infty^{(x,y)} h(x,y)$ as functions in $\xi$
\begin{equation}\label{MerInt}
\begin{aligned}
 &I_1(\xi) = \frac{2\zeta(\alpha) }{\wp'(\alpha)} \xi + \frac{1}{\wp'(\alpha)}
 \log \frac{\sigma(\alpha-\xi)}{\sigma(\alpha+\xi)},\\
 &I_2(\xi) = \xi + \tfrac{3}{5}\wp(\alpha)I_1(\xi),\\
 &I_3(\xi) = \zeta(\xi) - \tfrac{6}{25} \wp(\alpha)^2 I_1(\xi)
 - \tfrac{1}{5}\wp(\alpha) I_2(\xi),\\
 &I_4(\xi) = -\tfrac{1}{2}\wp'(\xi) - \tfrac{3}{5}\wp(\alpha) 
 \big(\gamma_4 + \tfrac{12}{25}\wp(\alpha)^2\big) I_1(\xi) - \tfrac{9}{25}\wp(\alpha)^2 I_2(\xi)
 - \tfrac{3}{5}\wp(\alpha) I_3(\xi).
\end{aligned}
\end{equation}
Now we calculate the periods. Let $\Omega = \Big(\begin{smallmatrix} T_1 & T_2 & T_3 & T_4\\
H_1 & H_2 & H_3 & H_4 \end{smallmatrix}\Big)$, where $T_k$ and $H_k$ are 2-dimensional vectors. 
By taking expansion of $I(\xi)$ near $\xi = \alpha$ we find that 
\begin{equation*}
 \begin{pmatrix}  T_1 \\ H_1  \end{pmatrix} = 2\pi \imath \Res_{t=0} I(\alpha+t),\qquad
 \begin{pmatrix}  T_4 \\ H_4  \end{pmatrix} = \infty.
\end{equation*}
For this computations Figure~\ref{F:RootsHCyclesD} is instrumental.
\begin{figure}[h]
 \centering 
 \includegraphics[width=0.6\textwidth]{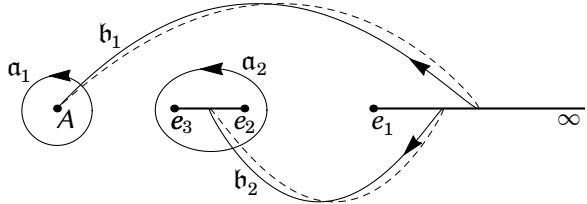}
 \caption{Sketch of branch points and basis homology cycles 
 when two branch points contract. \label{F:RootsHCyclesD}}
\end{figure}

On the other hand,
\begin{equation*}
 \begin{pmatrix}  T_2 \\ H_2  \end{pmatrix} = I(\xi+\omega) - I(\xi),\qquad
 \begin{pmatrix}  T_3 \\ H_3  \end{pmatrix} = I(\xi+\omega') - I(\xi).
\end{equation*}
Explicitly, for finite periods we have
\begin{gather}
 (T_1,T_2,T_3) = K_1 \begin{pmatrix} 0 & \omega & \omega' \\ 
 -\frac{\imath \pi}{\alpha} & \eta & \eta' \end{pmatrix}, \label{Tperiods} \\
 (H_1,H_2,H_3) = K_2 (T_1,T_2,T_3) +  K_3 \begin{pmatrix} 0 & \omega & \omega' \\ 
  0 & \eta & \eta' \end{pmatrix}, \label{Hperiods}
\end{gather}
where
\begin{gather*}
 K_1 = \begin{pmatrix} \frac{2}{\wp'(\alpha)}\zeta(\alpha) & -\frac{2}{\wp'(\alpha)}\alpha
 \\1 + \frac{6}{5}\frac{\wp(\alpha)}{\wp'(\alpha)}\zeta(\alpha) & 
 -\frac{6}{5}\frac{\wp(\alpha)}{\wp'(\alpha)}\alpha \end{pmatrix},\qquad
 K_2 = \begin{pmatrix} -\tfrac{9}{25}\wp(\alpha)^2 & 0 \\
 0 & -\big( \gamma_4 + \tfrac{12}{25}\wp(\alpha)^2\big) \end{pmatrix},\\
 K_3 = \begin{pmatrix} - \tfrac{1}{5}\wp(\alpha) & 1 \\
 \gamma_4 + \tfrac{6}{25} \wp(\alpha)^2 & -\frac{3}{5}\wp(\alpha) \end{pmatrix}.
\end{gather*}
and $\omega$, $\omega'$ are periods of the Weierstrass function $\wp$,
and $\eta = 2\zeta(\omega/2)$, $\eta' = 2\zeta(\omega'/2)$.
Thus, when $\lambda\in\Lambda_1$ $\rank \Omega = 3$ if and only if $\wp'(\alpha)\neq 0$.

To complete the proof it remains to notice that on $\Lambda_1$ 
the gradient of the discriminant \eqref{DeltaDef} vanishes together with $\wp'(\alpha)$. Indeed,
 \begin{equation}\label{DrvDelta}
 \partial_\lambda \Delta(\lambda)|_{\Lambda_1} = \frac{1}{5} (4\gamma_4^3 + 27 \gamma_6^2)\, 
 \big(\wp'(\alpha)\big)^6 \Big(\big(\tfrac{3}{5}\wp(\alpha)\big)^3, 
 \big(\tfrac{3}{5}\wp(\alpha)\big)^2, \tfrac{3}{5}\wp(\alpha), 1 \Big),
\end{equation}
here $\partial_\lambda$ stands for 
$(\partial_{\lambda_4},\,\partial_{\lambda_6},\,\partial_{\lambda_8},\,\partial_{\lambda_{10}})$.
By definition $4\gamma_4^3 + 27 \gamma_6^2$ does not vanish on $\Lambda_1$.
\end{proof}

Similarly, $\mathcal{F}_2$ is the set of double zeros of discriminant $\Delta(\lambda)$, 
and $\mathcal{F}_1$ is the set of triple zeros of discriminant $\Delta(\lambda)$:
\begin{align*}
 &\mathcal{F}_2 = \{\lambda \mid \partial_{\lambda} \Delta(\lambda)=0,\
\partial_{\lambda}^2 \Delta(\lambda)\neq 0\},\\
 &\mathcal{F}_1 = 
\{\lambda \mid \partial_{\lambda}^2 \Delta(\lambda)= 0,\ \partial_{\lambda}^3 \Delta(\lambda)\neq 0\}.
\end{align*}
Further, $\mathcal{F}_0$ is the set of $4$-tuple zeros of discriminant $\Delta(\lambda)$ which is
a singe point $\lambda\,{=}\,0$.

On the other hand, let $f(x)=x^5 + \sum_{k=0}^3 \lambda_{10-2 k} x^k$. Divisor of zeros $(f)_0$
is a formal product $p_1^{d_1} p_2^{d_2} \cdots p_5^{d_5}$, $p_i$ are distinct points, 
integers $d_i$ are non-negative and 
$\sum d_i = 5$ while $\sum d_i p_i = 0$. Assume $d_1\geqslant d_2 \geqslant \cdots
\geqslant d_5$ and denote $\deg (f)_0 = (d_1,d_2,\dots)$, for nonzero $d_i$. 
Clearly, $\deg (f)_0$ takes values in partitions of number $5$. 
Denote the dimension of corresponding subset of $\Lambda$ by $m$, we have 
$m = \# \deg (f)_0 -1$. In fact, $m$ equals the dimension of a component of $\mathcal{F}_m$.  
Table~\ref{T:Partitions} gives summary of all possible cases.
\begin{table}[h]
\caption{\label{T:Partitions}} 
\begin{tabular}{l||c|c|c|c|c|c|c}
\hhline {=#=|=|=|=|=|=|=} 
 $\deg (f)_0$ & $(1,1,1,1,1)$ & $(2,1,1,1)$ & $(3,1,1)$ &
 $(2,2,1)$ & $(3,2)$ & $(4,1)$ & $(5)$ \\
 genus $g$ & $2$ & $1$ & $1$ & $0$ & $0$ & $0$ & $0$\\
 $\# \deg (f)_0 -1$ & $4$ & $3$ & $2$ & $2$ & $1$ & $1$ & $0$ \\
 $\rank \Omega$ & $4$ & $3$ & $2$ & $2$ & $1$ & $1$ & $0$ 
\end{tabular} \\ 
\end{table}

This completes description of the stratification of $\Lambda$ by
the rank of corresponding lattice.

\begin{remark}
 Since $\mathcal{F}_2$ has nonempty intersections with both $\Lambda_1$ and $\Lambda_0$, 
 cf. Table~\ref{T:Partitions},
 the two-periodic functions on the associated `torus' can be of different nature: those that are essentially a combination
 of rational and elliptic functions, see Remark~\ref{R:SigmaF2}, and those that are combinations of
 exponential functions, see Theorem~\ref{Th:2}. The strata $\mathcal{F}_1$ and $\mathcal{F}_0$ are associated 
 with exponential and rational functions respectively.
\end{remark}

\subsection{Three-periodic functions}\label{ss:3PeriodF}
On the stratum $\mathcal{F}_3$ in the place of identity \eqref{LegId4} we have
\begin{gather*}
 \begin{pmatrix}
  T_1 & T_2 & T_3 \\ H_1 & H_2 & H_3 
 \end{pmatrix}^t J \begin{pmatrix}
  T_1 & T_2 & T_3 \\ H_1 & H_2 & H_3 
 \end{pmatrix} =
 2\pi \imath \begin{pmatrix} 0 & 0 & 0 \\ 0 & 0 & 1 \\ 0 & -1 & 0 \end{pmatrix}.
\end{gather*}

\begin{corollary}\label{C:PerProp}
For all $u=(u_3,\,u_1)\in\Complex^2$ 
 sigma-function $\bm{\sigma}(u;\lambda)$ obeys the periodicity property
\begin{gather*}
 \frac{\bm{\sigma}(u \pm T_k;\lambda)}
 {\bm{\sigma}(u;\lambda)}\bigg|_{\lambda\in\mathcal{F}_3} = 
 -\exp \big\{\pm H_k^t \left( \begin{smallmatrix} 0 & 1 \\ 1 & 0 \end{smallmatrix} \right) 
 (u \pm \tfrac{1}{2} T_k)\big\},\qquad k=1,\,2,\,3.
\end{gather*}
\end{corollary}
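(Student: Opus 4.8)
The plan is to argue directly from the closed form \eqref{SolG2} of $\bm{\sigma}|_{\Lambda_1}$, which applies since $\mathcal{F}_3\subset\Lambda_1$, and to reduce the statement to the quasi-periodicity of the Weierstrass functions $\sigma$, $\zeta$. First I would pass from the coordinates $(u_3,u_1)$ to $(u_3,\xi)$, where $\xi=u_1-\tfrac{3}{5}\wp(\alpha)u_3$ is the uniformizing parameter already present in \eqref{SolG2} and \eqref{MerInt}; this substitution is unimodular. Reading the period matrix \eqref{Tperiods} in the new coordinates, the three finite period vectors become
\[
\widehat{T}_1=\Bigl(\tfrac{2\pi\rmi}{\wp'(\alpha)},\,0\Bigr),\qquad
\widehat{T}_2=\Bigl(\tfrac{2(\zeta(\alpha)\omega-\alpha\eta)}{\wp'(\alpha)},\,\omega\Bigr),\qquad
\widehat{T}_3=\Bigl(\tfrac{2(\zeta(\alpha)\omega'-\alpha\eta')}{\wp'(\alpha)},\,\omega'\Bigr),
\]
so that translation by $T_k$ shifts $\xi$ by one of the genuine periods $0,\omega,\omega'$ of $\wp$ and shifts $u_3$ by a prescribed constant.

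Write \eqref{SolG2} as $\bm{\sigma}|_{\Lambda_1}=\rme^{-\frac12 u^{t}\mathcal{A}u}\,\Sigma(\xi,u_3)$ with $\mathcal{A}$ the constant symmetric matrix obtained from the quadratic prefactor and
\[
\Sigma(\xi,u_3)=\frac{\sigma(\alpha+\xi)\,\rme^{\frac12\wp'(\alpha)u_3-\zeta(\alpha)\xi}-\sigma(\alpha-\xi)\,\rme^{-\frac12\wp'(\alpha)u_3+\zeta(\alpha)\xi}}{\wp'(\alpha)\sigma(\alpha)}.
\]
The core step is to transform $\Sigma$ under the shift by $\widehat{T}_k$ using $\sigma(z+\omega)=-\rme^{\eta(z+\frac12\omega)}\sigma(z)$, $\sigma(z+\omega')=-\rme^{\eta'(z+\frac12\omega')}\sigma(z)$ and $\rme^{\pi\rmi}=-1$. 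With the explicit $\widehat{T}_k$ above the two terms of $\Sigma$ pick up the \emph{same} scalar factor: $\Sigma$ is multiplied by $-1$ under $\widehat{T}_1$, by $-\rme^{\eta(\xi+\frac12\omega)}$ under $\widehat{T}_2$, and by $-\rme^{\eta'(\xi+\frac12\omega')}$ under $\widehat{T}_3$. This coincidence of the two factors is exactly what produces the overall sign $-1$ in the assertion; the cycle $T_1$ is special only in that $\xi$ is fixed and the sign comes entirely from $u_3\mapsto u_3+\tfrac{2\pi\rmi}{\wp'(\alpha)}$.

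Combining these with the Gaussian prefactor and using that $\mathcal{A}$ is constant, the $u$-quadratic part of $-\tfrac12(u+T_k)^{t}\mathcal{A}(u+T_k)+\tfrac12 u^{t}\mathcal{A}u$ cancels, leaving $-T_k^{t}\mathcal{A}(u+\tfrac12 T_k)$; since the scalar factors above are moreover affine in $u$ through $\xi$, one arrives at $\bm{\sigma}(u+T_k;\lambda)/\bm{\sigma}(u;\lambda)=-\exp\{R_k(u)\}$ with $R_k$ an explicit affine function of $u$. It then remains to verify $R_k(u)=H_k^{t}\left(\begin{smallmatrix}0&1\\1&0\end{smallmatrix}\right)\bigl(u+\tfrac12 T_k\bigr)$ for $k=1,2,3$: matching the coefficient of $u$ gives a vector identity among $\mathcal{A}T_k$, the constant $u$-gradient of the scalar factor, and $\left(\begin{smallmatrix}0&1\\1&0\end{smallmatrix}\right)H_k$, and matching the constant gives a scalar identity; after substituting the explicit entries \eqref{Tperiods}--\eqref{Hperiods} and $\mathcal{A}$ these collapse to identities in the Weierstrass data and $\gamma_4$ that one checks directly (their mutual consistency for different cycles is precisely the modified Legendre identity displayed just before the corollary, with $\mathcal{A}$ in the role of $\eta\omega^{-1}$). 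The case $u-T_k$ follows from the case $u+T_k$ on replacing $u$ by $u-T_k$. I expect this last bookkeeping --- keeping careful track of the half-period translations inside $\Sigma$ and of the three cycles one at a time --- to be the only real obstacle; everything else is routine substitution.
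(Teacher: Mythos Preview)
Your argument is correct in outline and in its key computation (that the two summands of $\Sigma$ acquire the \emph{same} multiplier under each of the three shifts, yielding the global sign $-1$), and the remaining bookkeeping with the Gaussian prefactor and the explicit $H_k$ from \eqref{Hperiods} does close. So the proposal is sound.

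It is, however, a genuinely different route from the paper's. The paper does not touch the explicit formula \eqref{SolG2} at all: it simply invokes the standard quasi-periodicity of the genus~$2$ sigma-function
\[
\frac{\bm{\sigma}(u\pm T_k;\lambda)}{\bm{\sigma}(u;\lambda)}
=-\exp\Big\{\pm H_k^{t}\left(\begin{smallmatrix}0&1\\1&0\end{smallmatrix}\right)(u\pm\tfrac12 T_k)\Big\},
\qquad k=1,\dots,4,
\]
valid for generic $\lambda$, and observes that for $\lambda\in\mathcal{F}_3$ the three columns $(T_k,H_k)$, $k=1,2,3$, stay finite, so the identity persists by continuity of $\bm{\sigma}$ in $(u,\lambda)$. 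That is a one-line argument.

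What your approach buys is independence from the general genus~$2$ theory: you deduce the periodicity solely from \eqref{SolG2} and the Weierstrass quasi-periodicity, which also serves as a nontrivial consistency check between \eqref{SolG2} and the period data \eqref{Tperiods}--\eqref{Hperiods}. What the paper's approach buys is brevity and the conceptual point that nothing special happens on $\mathcal{F}_3$ --- the identity is inherited, not re-derived. If you want to present your computation, it is worth stating up front that the corollary follows at once from the genus~$2$ periodicity, and then offering your direct verification as an alternative.
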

Proof follows directly from the periodicity property of genus $2$ sigma-function.

\begin{remark}
The function $\Phi(u,\beta)$ defined by \eqref{BackerFC25}
 has Bloch property on $\mathcal{F}_4=\Lambda_2$ and keeps the property 
 when restricted to $\mathcal{F}_3$ 
\begin{equation*}
 \Phi(u +T_i,\beta) = \Phi(u,\beta) \rme^{M_i T_i},\qquad i=1,2,3.
\end{equation*}
The ``quasi-momenta'' $M_i$ are given by rather cumbersome expressions, which, however,
can be readily deduced from \eqref{MerInt}, \eqref{Tperiods} and \eqref{Hperiods} in a condensed form.
Let $\beta^t = (I_1(\alpha),I_2(\alpha))$ and $\rho =(I_4(\alpha),I_3(\alpha))$, 
where $\alpha$ is the image of double point~$A$, that is $\wp(\alpha)=A$. Then we have
\begin{gather*}
 M_1 = \rho + \beta^t K_2,\qquad
 M_2 = M_3 = \rho  +  \beta^t (K_2 + K_3 K_1^{-1}).
\end{gather*}
\end{remark}

Now, return to discussing three-periodic functions.
Over $\mathcal{F}_3$ any order greater than 1 logarithmic derivative of sigma-function 
is a three-periodic function.

Introduce the function
\begin{equation}\label{Pfunct}
 \mathcal{P}(u_3,u_1) = \frac{\sigma \big(\alpha + u_1 - \tfrac{3}{5} \wp(\alpha) u_3 \big)}
{\sigma \big(\alpha - u_1 + \tfrac{3}{5} \wp(\alpha) u_3\big)}\, 
\rme^{ \big(\wp'(\alpha) + \tfrac{6}{5} \wp(\alpha)\zeta (\alpha)\big) u_3 - 2\zeta (\alpha) u_1}
\end{equation}
with $\wp'(\alpha)\neq 0$. It is straightforward to verify that $T_1$, $T_2$, $T_3$ 
are periods of the function $\mathcal{P}(u_3,u_1)$. 

\begin{corollary}\label{C:3periodF}
Any meromorphic three-periodic function  in two variables $(u_3,\,u_1)$ with the periods 
$T_1$, $T_2$, $T_3$ 
is a rational function of 
$\bm{\mathcal{P}}_{\textup{basis}}=\big(\mathcal{P}(u_3,u_1),\,\wp(u_1 - \tfrac{3}{5} 
\wp(\alpha) u_3)$,
$\wp'(u_1 - \tfrac{3}{5} \wp(\alpha) u_3)$, $\wp(\alpha)$, $\wp'(\alpha)\big)$.
\end{corollary}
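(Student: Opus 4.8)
The plan is to reduce the two-variable problem to a one-variable one by the linear change of coordinates that already pervades Section~\ref{s:DegenSigma}, and then invoke the classical description of the field of elliptic functions. First I would introduce the new coordinates $v = u_1 - \tfrac{3}{5}\wp(\alpha) u_3$ and $u_3$ itself, so that the lattice generated by $T_1, T_2, T_3$ is mapped to a lattice that is ``triangular'' with respect to the splitting $\Complex^2 = \Complex_{v}\oplus\Complex_{u_3}$. Concretely, from \eqref{Tperiods}--\eqref{Hperiods} and the definition of $K_1$ one reads off that in the $(u_3,v)$ coordinates two of the period vectors become $(\ast,\omega)$ and $(\ast,\omega')$ — i.e.\ they project onto the full period lattice of $\wp$ in the $v$-direction — while the third, $T_1$, has $v$-component zero and $u_3$-component a nonzero multiple of $\tfrac{1}{\wp'(\alpha)}$ (this is exactly the $-\imath\pi/\alpha$ entry, reflecting the logarithmic residue). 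A meromorphic function periodic with respect to all three is therefore, for fixed $v$, periodic in $u_3$ with a single period $\tau_1$, hence a rational function of $\exp(2\pi\imath u_3/\tau_1)$; and the coefficients of that rational expression are meromorphic in $v$ and doubly periodic with respect to the lattice $\{\omega,\omega'\}$, hence rational in $\wp(v)$ and $\wp'(v)$.

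The next step is to identify the exponential $\exp(2\pi\imath u_3/\tau_1)$ in terms of the proposed generators. Here the function $\mathcal{P}(u_3,u_1)$ of \eqref{Pfunct} does the job: by construction it has $T_1, T_2, T_3$ as periods (as stated just before the corollary), it depends on $u_1,u_3$ only through $v$ and $u_3$, and — crucially — along the $T_1$-direction it is a genuine exponential in $u_3$ with no elliptic part, since the $\sigma$-quotient in \eqref{Pfunct} is built from $\sigma(\alpha\pm v)$ and is $T_1$-independent. Thus $\mathcal{P}(u_3,u_1)$, for fixed $v$, is $c(v)\,\exp(2\pi\imath u_3/\tau_1)$ with $c(v)$ an explicit elliptic function of $v$ (a ratio of $\sigma(\alpha+v)/\sigma(\alpha-v)$ times an exponential linear in $v$, which is itself expressible through $\wp(v),\wp'(v),\wp(\alpha),\wp'(\alpha),\zeta(\alpha)$ — and $\zeta(\alpha)$ in turn is a fixed constant once $\alpha$ is fixed, so it may be absorbed). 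Hence $\exp(2\pi\imath u_3/\tau_1) = \mathcal{P}(u_3,u_1)/c(v)$ is a rational function of the entries of $\bm{\mathcal{P}}_{\textup{basis}}$, and conversely any power of it is too.

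It then remains to assemble: an arbitrary three-periodic meromorphic $F$ is rational in $\exp(2\pi\imath u_3/\tau_1)$ with coefficients rational in $\wp(v),\wp'(v)$, and substituting the two displayed identifications expresses $F$ as a rational function of $\mathcal{P}(u_3,u_1)$, $\wp(v)$, $\wp'(v)$, together with the constants $\wp(\alpha)$, $\wp'(\alpha)$. The main obstacle is the first paragraph's structural claim: one must be careful that ``periodic in $u_3$ with a single period for each fixed $v$'' genuinely captures all of the three-periodicity, i.e.\ that the quotient $\Complex^2/\langle T_1,T_2,T_3\rangle$ really fibers as a cylinder over the elliptic curve $\Complex/\langle\omega,\omega'\rangle$ with the $\tau_1$-cylinder as fiber. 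This is where one needs the explicit triangular form of the period matrix from \eqref{Tperiods}--\eqref{Hperiods} and the nonvanishing $\wp'(\alpha)\neq 0$ (precisely the condition defining $\mathcal{F}_3$), which guarantees $\tau_1\neq 0$ so the $u_3$-direction is not collapsed; once that is in hand the rest is a routine application of the theorems of Weierstrass and Liouville in each variable separately.
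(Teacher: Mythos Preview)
Your route is quite different from the paper's. The paper does not analyse the lattice directly; it argues by degeneration from genus~$2$, invoking the structure theorem that every genus~$2$ Abelian function is rational in $\bm{\wp}_{\textup{basis}}=(\wp_{11},\wp_{13},\wp_{111},\wp_{113},\wp_{1111},\wp_{1113})$, and then rewriting each of these six generators on $\Lambda_1$ as an explicit rational function of $\bm{\mathcal{P}}_{\textup{basis}}$ via the Jacobi inversion formulas \eqref{IJPrwpG2}--\eqref{IJPrwpG2dG} (differentiating once more for $\wp_{1111},\wp_{1113}$). So ``three-periodic'' in the paper effectively means ``degenerate Abelian'', and the argument is purely algebraic substitution.

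Your direct argument has a genuine gap. The step ``for fixed $v$, periodic in $u_3$ with a single period $\tau_1$, hence a rational function of $\exp(2\pi\imath u_3/\tau_1)$'' is false: a meromorphic function on $\Complex$ with one period descends to a meromorphic function on the cylinder $\Complex^\ast$, and the meromorphic functions on $\Complex^\ast$ are far from being rational in the coordinate (e.g.\ $\exp(w+w^{-1})$ with $w=e^{\wp'(\alpha)u_3}$). You would need meromorphy at the two ends of the cylinder, i.e.\ an algebraicity hypothesis, which the phrase ``meromorphic three-periodic'' alone does not give. The paper avoids this by staying inside the degenerate Abelian function field from the start.

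There is also a smaller slip: $c(v)=\dfrac{\sigma(\alpha+v)}{\sigma(\alpha-v)}\,e^{-2\zeta(\alpha)v}$ is \emph{not} elliptic---under $v\mapsto v+\omega$ it picks up the factor $e^{2(\alpha\eta-\zeta(\alpha)\omega)}$, reflecting the nonzero $u_3$-components of $T_2,T_3$. So expanding in $e^{\wp'(\alpha)u_3}$ yields quasi-periodic, not doubly periodic, coefficients. Expanding in $\mathcal{P}$ itself (which \emph{is} $T_2,T_3$-invariant) fixes this, but then you are back to needing rationality in $\mathcal{P}$ for fixed $v$, which is precisely the gap above.
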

\begin{proof}
Any genus 2 Abelian function, that is a meromorphic four-periodic function of $(u_3,u_1)$, has a unique 
representation as the rational function of 
$\bm{\wp}_{\textup{basis}}=(\wp_{11}$, $\wp_{13}$, $\wp_{111}$, $\wp_{113}$,
$\wp_{1111}$, $\wp_{1113})$, in particular
\begin{equation}\label{lambdaWP}
\begin{aligned}
 &\lambda _4 = \tfrac{1}{2}\wp _{1111} - 3 \wp _{11}^2 - 2 \wp _{13},\\
 &\lambda _6 = \tfrac{1}{2}\wp _{1113} - \tfrac{1}{2}\wp _{1111} \wp_{11} 
 + \tfrac{1}{4}\wp_{111}^2 + 2 \wp_{11}^3 - 2 \wp_{13}\wp_{11},\\
 &\lambda _8 = - \tfrac{1}{2}\wp _{1113} \wp_{11} - \tfrac{1}{2}\wp _{1111} \wp_{13}
 +\tfrac{1}{2} \wp_{113}\wp_{111} + \wp_{13}^2 + 4\wp_{11}^2 \wp_{13},\\
 &\lambda _{10} = - \tfrac{1}{2}\wp _{1113} \wp_{13} + \tfrac{1}{4}\wp_{113}^2 +
 2 \wp_{13}^2 \wp_{11}.
\end{aligned}
\end{equation}
The composition of $\Delta(\lambda)$, see \eqref{DeltaDef}, with \eqref{lambdaWP} defines a polynomial 
$\Delta\big(\lambda(\bm{\wp}_{\text{basis}})\big)$. When $\lambda\,{\in}\,\Lambda_1$,
the polynomial $\Delta\big(\lambda(\bm{\wp}_{\text{basis}})\big)$ should vanish, while 
$\delta(\gamma)\,{=}\,4\gamma_4^3 + 27\gamma_6^2$ should be nonzero.
Taking into account
\begin{align*}
 &\gamma_4 = \frac{\wp'(U_1)^2 - 4 \wp(U_1)^3 - \big(\wp'(\alpha)^2 - 4\wp(\alpha)^3\big)}
 {4\big(\wp(U_1)-\wp(\alpha)\big)},\\
 &\gamma_6 = 
 -\frac{\wp(\alpha) \big(\wp'(U_1)^2 - 4 \wp(U_1)^3\big) - 
 \wp(U_1) \big(\wp'(\alpha)^2 - 4\wp(\alpha)^3\big)}{4\big(\wp(U_1)-\wp(\alpha)\big)},
\end{align*}
where $U_1\,{=}\,u_1 \,{-}\, \tfrac{3}{5} \wp(\alpha) u_3$,
we see the condition $4\gamma_4^3 + 27\gamma_6^2\,{\neq}\,0$ turns into a condition on a polynomial in
$\wp(U_1)$, $\wp'(U_1)$, $\wp(\alpha)$, and $\wp'(\alpha)$.
By \eqref{IJPrwpG2}--\eqref{IJPrwpG2dG} we can express $\wp_{11}$, $\wp_{13}$, $\wp_{111}$, $\wp_{113}$
as rational functions of $\bm{\mathcal{P}}_{\text{basis}}$.
Differentiating expressions for $\wp_{111}$, $\wp_{113}$ with respect to $u_1$ we get
the rational functions
\begin{equation*}
\begin{split}
 \wp_{1111} =\,&\, 6\mathcal{S}^4 - 8 \big(2\wp(U_1)+\wp(\alpha)\big)\mathcal{S}^2
 +4 \wp'(U_1)\mathcal{S} \\ & + 4\wp(U_1)^2 + 10 \wp(\alpha)\wp(U_1) + 
 4\wp(\alpha)^2 - \frac{\wp'(U_1)^2 - \wp'(\alpha)^2}{2\big(\wp(U_1)-\wp(\alpha)\big)},\\
 \wp_{1113} =\,&\,{-} 6\big(\wp(U_1)-\tfrac{2}{5}\wp(\alpha)\big) \mathcal{S}^4
 + 6\wp'(U_1) \mathcal{S}^3  \\ & + \bigg(4\wp(U_1)^2 + \tfrac{38}{5} \wp(\alpha)\wp(U_1) 
 + \tfrac{14}{5}\wp(\alpha)^2 - 
 \frac{3\big(\wp'(U_1)^2 - \wp'(\alpha)^2\big)}{2\big(\wp(U_1)-\wp(\alpha)\big)}\bigg) \mathcal{S}^2 
  \\ & - \tfrac{2}{5}\wp'(U_1) \Big(10\wp(U_1) + 11\wp(\alpha)\Big) \mathcal{S} 
 + 6 \wp(\alpha) \big(\tfrac{2}{5}\wp(U_1)^2 + \wp(\alpha)\wp(U_1) + 
 \tfrac{2}{5}\wp(\alpha)^2 \big) \\ & - \tfrac{4}{5} \big(\wp(U_1)^2 + \wp(\alpha)^2\big)
 + \frac{9\wp(U_1)\big(\wp'(U_1)^2 + \wp'(\alpha)^2\big)}{5\big(\wp(U_1)-\wp(\alpha)\big)}.
\end{split}
\end{equation*}
These rational expressions for $\bm{\wp}_{\text{basis}}$ substituted in 
$\Delta\big(\lambda(\bm{\wp}_{\text{basis}})\big)$
make it vanish identically. Furthermore, we can re-express a rational function of $\bm{\wp}_{\text{basis}}$
as a rational function of $\bm{\mathcal{P}}_{\text{basis}}$.
\end{proof}

The parametrization of $\bm{\wp}_{\text{basis}}$ by rational functions of $\bm{\mathcal{P}}_{\text{basis}}$
is analogous to the parametrization of $\lambda$ in terms of $a_2$ and $\gamma$, cf. \eqref{CoefC25degSubs}.
In fact, the former parametrization is induced by the latter, which is clearly seen if we follow the connection
between sigma-function $\bm{\sigma}$ and generators $\bm{\wp}_{\text{basis}}$ of the field of fiber-wise Abelian functions 
on the universal space of genus $2$ Jacobi varieties.

\begin{remark}
Note that the function $f(z_1,z_2) = \mathcal{P}(z_1/\wp'(\alpha),
c z_2+\tfrac{3}{5}\wp(\alpha)z_1/\wp'(\alpha))$, where $c\neq 0$ is an arbitrary number, 
is a solution of the following system of functional equations
\begin{equation*}
 f(z_1,z_2) f(z_1,-z_2) = \exp(z_1), \qquad f(z_1,z_2) = -f(-z_1,-z_2).
\end{equation*}
\end{remark}

\begin{proposition}
 A field of three-periodic functions is a transcendental extension of 
 the field of elliptic functions with transcendence degree 1.
\end{proposition}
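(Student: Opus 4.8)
The plan is to realize the field $\mathcal{K}$ of meromorphic three-periodic functions (with the period triple $T_1,T_2,T_3$ of the stratum $\mathcal{F}_3$) as a simple transcendental extension $\mathcal{K}=\mathcal{L}(\mathcal{P})$, where $\mathcal{P}$ is the function \eqref{Pfunct} and $\mathcal{L}$ is the field of elliptic functions, embedded in $\mathcal{K}$ as the subfield generated by $\wp(U_1)$ and $\wp'(U_1)$ with $U_1=u_1-\tfrac{3}{5}\wp(\alpha)u_3$.

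First I would fix a rank $3$ lattice, so that $\alpha$, and hence $\wp(\alpha)$ and $\wp'(\alpha)\neq0$, are constants; by Corollary~\ref{C:3periodF} every element of $\mathcal{K}$ is then a rational function of $\mathcal{P}$, $\wp(U_1)$, $\wp'(U_1)$, i.e. $\mathcal{K}=\Complex\big(\mathcal{P},\wp(U_1),\wp'(U_1)\big)$. Since $\wp(U_1)$ and $\wp'(U_1)$ belong to $\bm{\mathcal{P}}_{\textup{basis}}$, they are three-periodic (verified from \eqref{Tperiods}--\eqref{Hperiods} exactly as for $\mathcal{P}$), so $\mathcal{L}:=\Complex\big(\wp(U_1),\wp'(U_1)\big)\subseteq\mathcal{K}$; via the argument $U_1$ this subfield is the function field of the elliptic curve \eqref{CurveHG1}, in which $\wp'(U_1)$ is algebraic over the transcendental element $\wp(U_1)$ through $\wp'(U_1)^2=4\wp(U_1)^3-g_2\wp(U_1)-g_3$. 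Hence $\mathcal{L}$ is precisely the field of elliptic functions and has transcendence degree $1$ over $\Complex$, and $\mathcal{K}=\mathcal{L}(\mathcal{P})$ is generated over $\mathcal{L}$ by one element, so the transcendence degree of $\mathcal{K}/\mathcal{L}$ is at most $1$.

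The crux is then to show that $\mathcal{P}$ is transcendental over $\mathcal{L}$, and the key observation is that in the coordinates $(u_3,U_1)$ formula \eqref{Pfunct} becomes $\mathcal{P}=g(U_1)\,\rme^{\wp'(\alpha)u_3}$ with $g(U_1)=\sigma(\alpha+U_1)\sigma(\alpha-U_1)^{-1}\rme^{-2\zeta(\alpha)U_1}\not\equiv0$, whereas every element of $\mathcal{L}$ depends on $(u_3,U_1)$ only through $U_1$. An algebraic relation $\sum_{k=0}^{n}a_k\mathcal{P}^k=0$ with $a_k\in\mathcal{L}$, $a_n\neq0$, would give $\sum_{k=0}^{n}a_k(U_1)g(U_1)^k\rme^{k\wp'(\alpha)u_3}=0$; evaluating at a value of $U_1$ outside the poles of the $a_k$ and the zeros of $g$ and varying $u_3$ over $\Complex$, the linear independence of $\rme^{k\wp'(\alpha)u_3}$, $k=0,\dots,n$ (here $\wp'(\alpha)\neq0$ enters), forces $a_k(U_1)g(U_1)^k=0$, hence $a_k\equiv0$ — a contradiction. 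Therefore $\mathcal{P}$ is transcendental over $\mathcal{L}$ and the transcendence degree of $\mathcal{K}/\mathcal{L}$ equals $1$. The only remaining bookkeeping is confirming that the non-$\mathcal{P}$ generators of $\bm{\mathcal{P}}_{\textup{basis}}$ are invariant under all of $T_1,T_2,T_3$, and that a suitable generic $U_1$ exists — both are immediate, the latter because the bad values form a proper analytic subset.
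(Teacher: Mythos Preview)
Your proof is correct and follows exactly the approach of the paper: invoke Corollary~\ref{C:3periodF} to obtain $\mathcal{K}=\mathcal{L}(\mathcal{P})$ with $\mathcal{L}=\Complex(\wp(U_1),\wp'(U_1))$, and identify $\mathcal{P}$ as the transcendental element. The paper's own proof is a one-line reference to that corollary, so your explicit verification that $\mathcal{P}=g(U_1)\,\rme^{\wp'(\alpha)u_3}$ is transcendental over $\mathcal{L}$ via linear independence of the exponentials $\rme^{k\wp'(\alpha)u_3}$ fills in the detail the paper leaves implicit.
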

Proof follows from Corollary~\ref{C:3periodF}, the 
function $\mathcal{P}(u_3,u_1)$ serves as the transcendental element.

\section{Concluding remarks}
For all values of parameters $\lambda$  sigma-function $\bm{\sigma}(u;\lambda)$
is essentially a function of the same nature, that is remains holomorphic and entire in all its
arguments for singular curves as well. It possesses a periodicity property, in particular, 
for the case of actual genus $1$ and $\partial_{\lambda} \Delta \neq 0$ it is given
in Corollary~\ref{C:PerProp}, and the functions $-\partial_{u_i} \partial_{u_j} \log \bm{\sigma}(u;\lambda)$
are three-periodic. The whole classification of degeneration strata is given in Table~$1$. 
`Degenerate' expressions \eqref{SolG2} 
and \eqref{SolG2d}, at special values
of parameters $\lambda$, are useful for solving generalized Jacobi inversion problem, and
Schr\"{o}dinger equation with periodic potential.

The technique we use above can be extended almost literally to higher genera
hyperelliptic sigma-functions. Generalization to non-hyperelliptic sigma-functions is a challenging problem.
Based on \eqref{SolG2} and well-known formula for degenerate Weierstrass sigma-function 
\cite{BatErd}, namely when $(g_2,g_3)\mapsto (12a^2,-8a^3)$
\begin{equation*}
 \sigma(u) \mapsto \tfrac{1}{2} \sqrt{3a}\, \rme^{-\frac{1}{2}a u^2} 
 \big(\rme^{\sqrt{3a}u} - \rme^{-\sqrt{3a}u} \big),
\end{equation*}
we conjecture that evaluation of a genus $g$ hyperelliptic sigma-function at a stratum 
of parameters $\Lambda_{g-1}$,
where genus of the underlying curve falls by $1$, has similar structure 
\begin{equation*}
 \bm{\sigma}(u) \mapsto C \rme^{- u^t \mathcal{Q} u} 
 \Big(\sigma(\mathcal{A}+u) \rme^{\mathcal{M}^t u} 
 - \sigma(\mathcal{A}-u) \rme^{-\mathcal{M}^t u} \Big).
\end{equation*}
Here sigma-function on the left hand side is in genus $g$, while
sigma-function on the right hand side is in genus $g-1$, 
scalar $C$, $g\times g$ matrix $\mathcal{Q}$ and vectors $\mathcal{A}$ and $\mathcal{M}$
are expressed with the hep of first and second kind Abel integrals as 
functions of the coordinates of a double point and the parameters of genus $g-1$ curve
corresponding to a point in $\Lambda_{g-1}$. From this viewpoint sigma-function
in genus $0$ is a constant function, say,~$1$. We can regard the result of degeneration 
as an action of an operator $\mathcal{T}$, which is in essence an evaluation operator. 
Then properly tuned operators $\mathcal{T}(a)$
and $\mathcal{T}(b)$ associated with double points at $a$ and $b$ 
commute with respect to composition, which opens a possibility to study 
further degeneration of sigma-function in a more abstract setting. 

For the generalized Jacobi inversion problem considered in Subsection~\ref{ss:GJIP}
an alternative solution is known within the framework of the generalized Theta-function theory,
which is developed by E. Previato \cite{Previato}, Yu. Fedorov \cite{Fedorov}, 
H. Braden and Yu. Fedorov \cite{BraFed}.
A connection between the degenerate sigma-function \eqref{SolG2} and the 
generalized Theta-function can be traced through the relation between sigma- and theta-functions
in genus $1$, see \cite{BatErd}.

The subject of Subsection~\ref{ss:SchEq} may be viewed as the simplest examples of a potential
of mixed solitonic and finite-gap nature. It is of considerable interest to explicitly construct potentials
that possess arbitrary collection of points and segments in the place of spectra.

In general, lattices of odd ranks lead to generalized Jacobi varieties, 
see \cite{Fay,Previato}. The rank three lattice from Subsection~\ref{ss:3RankL} 
is an example of that. The corresponding generalized Jacobi variety is a union of
a cylinder and a torus. At the same time, we conjecture that a field of $2g+1$-periodic functions 
can be effectively constructed as a transcendental extension of the field of hyperelliptic 
Abelian functions in genus $g$ with help of a single transcendental element 
of a form similar to \eqref{Pfunct}, namely
\begin{equation*}
 \mathcal{P}(u_{g+1},u) = \frac{\sigma(\alpha+u)}{\sigma(\alpha-u)} \exp\big\{
 c(\alpha) u_{g+1} + d(\alpha)^t u\big\},
\end{equation*}
where $\sigma$ denotes genus $g$ sigma-function, $u,\, \alpha \in\Complex^{g}$, $u_{g+1}\in\Complex$, 
and $c(\alpha)$, $d(\alpha)$ are appropriate functions.

Our study of three-periodic functions of two complex variables 
will be extended in our future publications, in particular we plan to derive 
explicit form of addition law and to find special dynamical systems solvable by these functions.

\subsection*{Acknowledgements}
This research was supported in part by the grant GMJT 2014-20 of the Glasgow Learning, 
Teaching and Research Fund.
One of the authors (JB) thanks professors J.\;C. Eilbeck and Fredericke van Wjick for warm hospitality
during her stay in Edinburgh in winter 2015. We also thank
professors C. Athorne, H. W. Braden, J.\;C. Eilbeck, Victor Enolski for stimulating discussions.

\end{document}